\newcommand{\Todo}[1]{}
\newcommand{\Until}{\mathsf{U}}
\newcommand{\Always}{\mathsf{G}}
\newcommand{\Eventually}{\mathsf{F}}
\newtheorem{theorem}{Theorem}
\newtheorem{lemma}{Lemma}
\newtheorem{definition}{Definition}
\newtheorem{example}{Example}
\newenvironment{proof}[1][Proof.]{\begin{trivlist}
\item[\hskip \labelsep {\itshape #1}]}{\end{trivlist}}
\newcommand{\AmSLaTeX}{%
 $\mathcal A$\lower.4ex\hbox{$\!\mathcal M\!$}$\mathcal S$-\LaTeX}
\def\BibTeX{{\rmfamily B\kern-.05em
 \textsc{i\kern-.025em b}\kern-.08em
  T\kern-.1667em\lower.7ex\hbox{E}\kern-.125emX}}
\def\tmpcite#1{\@ifundefined{b@#1}{\textbf{?}}{\csname b@#1\endcsname}}%
\title[Monitoring Temporal Properties using Interval Analysis]
      {Monitoring Temporal Properties using Interval Analysis}
\begin{document}
\maketitle

\begin{summary}
Verification of temporal logic properties plays a crucial role in proving the desired behaviors of continuous systems. 
In this paper, we propose an interval method that verifies the properties described by a bounded signal temporal logic.
We relax the problem so that if the verification process cannot succeed at the prescribed precision, it outputs an inconclusive result. The problem is solved by an efficient and rigorous monitoring algorithm.
This algorithm performs a forward simulation of a continuous-time dynamical system, detects a set of time intervals in which the atomic propositions hold, and validates the property by propagating the time intervals.
In each step, the continuous state at a certain time is enclosed by an interval vector that is proven to contain a unique solution.
We experimentally demonstrate the utility of the proposed method in formal analysis of nonlinear and complex continuous systems.
\end{summary}
\begin{keywords}
  continuous-time dynamical systems, interval analysis, linear temporal logic, falsification method
\end{keywords}

\section{Introduction}\label{intro}

Reasoning with the temporal logic properties in continuous systems is a challenging and important task that combines computer science, numerical analysis, and control theory.
Various methods for the verification of continuous and hybrid systems with bounded temporal properties have been developed, e.g., 
\cite{Plaku2009,Nghiem2010,David2012,Zuliani2013}, 
enabling the falsification of various properties (e.g., safety, stability, and robustness) of large and complex systems. However, the state-of-the-art tools are based on numerical simulations whose numerical errors frequently yield qualitatively wrong results, which become problematic even in statistical evaluations.

Computing rigorously approximated reachable sets is a fundamental process in formal methods for continuous systems.
Techniques based on interval analysis (Section~\ref{s:interval}) have proven practical in the reachability analysis of nonlinear and complex continuous systems~\cite{Eggers2008,Collins2008,Ramdani2011,Ishii2011,Chen2012,Gao2013:SMODE}. 
In these frameworks, the computation is \emph{$\delta$-complete} \cite{Gao2012}: assuming that function values may be perturbed within a predefined $\delta \in \SPosRatSet$, 
%and, by setting bounds in a problem description, 
many generically undecidable problems become decidable.
However, $\delta$-complete verification of generic properties other than reachability is a challenging topic.

The contribution of this paper is to propose an interval method that verifies (bounded portions of) the signal temporal logic (STL) properties (Section~\ref{s:stl}) of a class of continuous-time dynamical systems (Section~\ref{s:cs}; extension to hybrid systems is straightforward).
Our method reliably computes three values: $\Valid$, $\Unsat$, and $\Unknown$. The method outputs $\Valid$ or $\Unsat$ when the soundness is guaranteed by interval analysis;
otherwise, when the verification fails after reaching a prescribed precision threshold, it outputs $\Unknown$.
Our method is based on validated interval analysis, and therefore it is reliable compared to the existing simulation-based monitoring tools, e.g., \cite{Maler2003,Fainekos2006a,Donze2010,Donze2010a}.
We show that simulation with numerical errors may compute an incorrect signal for a chaotic system.
In contrast with the existing tools that monitor a single behavior of a system,
our method monitors a set of possible behaviors using an interval-based technique;
therefore, the method can check the validity of the system.
In this sense, our approach can be viewed as an integration of reachability analysis and simulation-based monitoring methods.
The relaxation allowing $\Unknown$ results enables us to generate an efficient monitor for STL properties that can be regarded as a variant of $\delta$-complete procedures.
We demonstrate efficient and reliable monitors for several continuous systems including a chaotic system.

In Section~\ref{s:method}, we present an algorithm for monitoring STL properties based on the forward simulation that encloses a signal with a set of \emph{boxes} (i.e., interval vectors). 
For each atomic proposition involved in a property $\phi$ to be verified, the algorithm obtains an inner and outer approximation of the time intervals in which the proposition holds. The interval Newton operator is used for the purposes of accelerating the search of instants where the satisfaction of propositions changes, and certifying the uniqueness of these event within their enclosures, eventually certifying the sequence of consistent/inconsistent time intervals over time for each proposition.
Next, it modifies the set of time intervals according to the syntax of the property $\phi$; finally, it checks that $\phi$ holds at the initial time.
Using our implementation, we show that several benchmarks are verified efficiently, yet non-robust instances with respect to numerical errors are rejected (Section~\ref{s:ex}).
The implementation reliably analyzes a set of signals and provides a foundation for verification and parameter synthesis of complex systems.

\section{Related Work} \label{s:related}

Many previous studies have applied interval methods to reachability analyses of continuous and hybrid systems~\cite{Eggers2008,Collins2008,Ramdani2011,Ishii2011,Chen2012,Gao2013:SMODE}.
These methods output an over-approximation of reachable states as a set of boxes. 
%Basic techniques for enclosing reachable sets with a set of boxes (interval vectors) are common in these methods and in our method.
%Verification of various properties other than safety has been left as an open issue.
%Common framework for enclosing reachable sets with a set of boxes (interval vectors) has been developed: they compute continuous evolution of a system with an interval-based evaluation of a Taylor-expanded form of flow conditions, and compute a discrete change by searching a solution of a guard condition and then enclosing the image of a jump.
%
Interval analysis often proves the unique existence of a solution within a resulting interval, and
it is also applicable to interval-based reachability analysis~\cite{Ishii2011,Goubault2014}.
%, but it is not considered in most of the methods for continuous systems.
Our method utilizes the proof in the verification of more generic temporal properties.

%The series of work performed by Gao et al.~\cite{Gao2012,Gao2012IJCAR,Gao2013:SMODE,Gao2014} provided a formulation of interval-based reachability analysis in an SMT framework.
%They generalized interval methods as the \emph{$\delta$-complete} decision procedures, and showed that SMT problems involving \emph{computable} real functions are decidable in the $\delta$-complete framework.
%It was also shown that the $\delta$-complete bounded reachability analysis of hybrid systems is decidable and is $\CPSPACE$-complete in a realistic setting. Discussion on decidability in Section~\ref{s:delta} extends their results.
%Computability of a temporal logic is discussed in \cite{Collins2011} for discrete-time continuous systems that are more limited than hybrid systems.

Reasoning of real-time temporal logic has been a research topic of interest~\cite{Alur1996,Shultz1997}. 
Numerical method for \emph{falsification} of a temporal property is straightforward~\cite{Maler2003}.
The algorithm simulates a signal of a bounded length and checks the satisfiability of the negation of the property described by a bounded temporal logic.
This paper presents an interval extension of this falsification method.

To falsify realistic nonlinear models efficiently, researchers have proposed a tree-search method~\cite{Plaku2009}, a Monte-Carlo optimization method~\cite{Nghiem2010}, and statistical model checking methods~\cite{David2012,Zuliani2013}.
Despite their successes, these methods are compromised by numerical error. To improve the reliability and practicality of the falsification, integration with our interval method will be a promising future direction.
An integrated statistical and interval method was also proposed in \cite{Wang2014} for reachability analysis.

To facilitate simulation-based verification of temporal properties, the \emph{robustness} concept has been proposed~\cite{Fainekos2006a,Donze2010,Nghiem2010}.
In these works, the degree of robustness defines the distance between a signal and a region over which a proposition holds.
If the absolute value of the degree is small, it is likely to be unreliable because of numerical errors.
%Robustness may reject such wrong signals but a result is still unreliable as long as it is computed numerically.
Our method rigorously ensures robustness by verifying that every intersection between a signal and each boundary in the state space is enclosed with an interval.

There exist several methods for model checking of temporal logic properties~\cite{Podelski2006,Cimatti2014}.
\cite{Podelski2006} proposed a method specialized in stability properties, which is described as a specific form of temporal logic formula.
%the method verifies the stability properties based on a reachability analysis and a construction of Lyapunov-like function. 
\cite{Cimatti2014} proposed a method that translates a verification problem into a reachability problem with the $k$-Liveness scheme,
%that checks that the number of satisfying a fairness condition is bounded, 
which is incomplete in general settings.
Our method can be viewed as a bounded model checking method that validates a bounded temporal property when the property is satisfied by all signals emerging from the interval parameter value.

%Our method performs verification along with a path of a system.
%This idea was also explored in \cite{Bu2010}.

\section{Interval Analysis}
\label{s:interval}

This section introduces selected topics and techniques based on interval analysis~\cite{Moore1966,Neumaier1990}.

%\subsection{Basic Notions and Techniques} \label{s:ia}

A (bounded) \textit{interval} $\a = [\LB{a}, \UB{a}]$ is a connected set of real numbers $\{b \in \RealSet ~|~ \LB{a} \leq b \leq \UB{a}\}$.
$\IntSet$ denotes the set of intervals. 
$\PosIntSet$ denotes the subset $\{[\LB{a},\UB{a}]\in\IntSet ~|~ \LB{a}\leq 0\}$.
For an interval $\a$,
$\LB{a}$ and $\UB{a}$ denote the lower and upper bounds, respectively; and
%$\Wid{\a}$ denotes 
%the width is defined as $\UB{a}-\LB{a}$; and
$\Inter{\a}$ denotes the interior %
%, i.e., the open interval 
$\{b \in \RealSet ~|~ \LB{a} < b < \UB{a}\}$.
%and $\Midpt{\a}$ denotes the midpoint, i.e., $({\UB{a}+\LB{a}})/{2}$.
%
$[a]$ denotes a point interval $[a,a]$.
The hypermetric between two intervals $\a$ and $\b$,
$\Dist{\a}{\b}$ is given by % TODO Hausdorff distance
$\max ( |\UB{a}-\UB{b}| , |\LB{a}-\LB{b}| )$.
%, and $\a\setminus\b$ denotes the {\em set difference}, i.e., $\{c \in \RealSet ~|~ c\in\a, c\notin\Inter{\b}\}$, which can also be represented as the union of at most two intervals.
%
For a set $S \subset \RealSet$, $\Box S$ denotes the interval $[\inf S, \sup S]$.
All these definitions are naturally extended to interval vectors;
an $n$-dimensional \textit{box} (or interval vector) $\a$ is a tuple of $n$ intervals
$(\a_1, \ldots, \a_n)$, and
$\IntSet^n$ denotes the set of $n$-dimensional boxes.
For $a\in\RealSet^n$ and $\a\in\IntSet^n$,
we use the notation $a \in \a$, which is interpreted as $\ForAll{i}{\{1,\ldots,n\}} ~ a_i \in \a_i$.

In actual implementations, the interval bounds should be machine-representable floating-point numbers, and other real values are rounded in the appropriate directions.

%A {\em paving} $P$ is a set of boxes $\{\a_{1},\ldots,\a_{k}\}$ such that all $\a_{i}$ have the same dimension $n$. Boxes in a paving may overlap.
%We denote $\cup P$ the union of the boxes in a paving $P$, i.e., the set $\{a\in\RealSet^n ~|~ \Exists{\a_{i}}{P} ~ a\in\a_{i}\}$.

Given a function $f : \RealSet^n \to \RealSet$,
$\f : \IntSet^n \to \IntSet$ is called an \emph{interval extension} of $f$ if and only if it satisfies the containment condition
$\ForAll{\a}{\IntSet^n} ~ \ForAll{a}{\a} ~ ( f(a) \in \f(\a) )$.
This definition is generalizable to function vectors $\f : \RealSet^n \to \RealSet^m$.
%where $n_f \in \NatSet_{>1}$.
%
Given two intervals $\a,\b\in\IntSet$, we can compute interval extensions of the four operators $\circ \in \{+,-,\ast,/\}$ as $\Box\{\LB{a}\circ\LB{b}, \LB{a}\circ\UB{b}, \UB{a}\circ\LB{b}, \UB{a}\circ\UB{b}\}$ (assuming $0 \not\in \b$ for division).

For arbitrary intervals $\a,\b,\d\in\IntSet$, the \emph{extended division} 
$\Box\{d\in\d ~|~ \Exists{a}{\a}~\Exists{b}{\b}~ a = b d\}$
%is defined as
%\[
%	\ExtDiv(\a,\b,\d) := \Box\{d\in\d ~|~ \Exists{a}{\a}~\Exists{b}{\b}~ a = b d\},
%\]
%and 
can be implemented as follows (see Section~4.3 of \cite{Neumaier1990}):
\[
	\ExtDiv(\a,\b,\d) := 
	\begin{cases}
		(\a/\b) \ \cap\ \d & \text{if}~ 0 \not\in \b\\
		\Box(\d \setminus (\LB{a}/\LB{b}, \LB{a}/\UB{b})) & \text{if}~ \LB{a} \!>\! 0 \!\in\! \b\\
		\Box(\d \setminus (\UB{a}/\UB{b}, \UB{a}/\LB{b})) & \text{if}~ \UB{a} \!<\! 0 \!\in\! \b\\
		\d & \text{if}~ 0 \in \a,\b
	\end{cases}
\]
In the second and third cases, when $\LB{b}=0$ (resp. $\UB{b}=0$), we set $\LB{a}/\LB{b}$ and $\UB{a}/\LB{b}$ as $-\infty$ and $\infty$ (resp. $\LB{a}/\UB{b}$ and $\UB{a}/\UB{b}$ as $\infty$ and $-\infty$).

%\subsection{Interval Newton Method}
%\label{s:inewton}

Given a differentiable function $f(a) : \RealSet \to \RealSet$ and a domain interval $\a$,
a root $\tilde{a} \in \a$ of $f$ such that $f(\tilde{a}) = 0$ is included in the result of the \emph{interval Newton operator}
\[
	\hat{a} + \ExtDiv(-\f(\hat{a}), \f'(\a), \a-\hat{a}) \approx
	\left( \hat{a} - \frac{\f(\hat{a})}{\f'(\a)} \right) \cap \a,
\]
where $\hat{a} \in \a$, and $\f$ and $\f'$ are interval extensions of $f$ and the derivative of $f$, respectively. 
The first expression is always valid while the second expression is valid only when $\f'(\a)$ does not contain 0.
Iterative applications of the operator will converge.
Let $\a'$ be the result of applying the operator to $\a$. If $\a' \subseteq \Inter{\a}$, a unique root exists in $\a'$.

\section{Continuous-Time Dynamical Systems}
\label{s:cs}

We consider dynamical systems whose behaviors are described by ordinary differential equations (ODEs).

\begin{definition}
A \emph{continuous-time dynamical system} is a tuple 
$\CS :=\bigl( (u,x), U\Times X, X_\Init, \Flow \bigr)$
consisting of the following components:
\begin{itemize}
\item A vector of real-valued \emph{parameters} $u = (u_1,\ldots,u_m)$.
\item A vector of real-valued \emph{variables} $x = (x_1,\ldots,x_n)$.
\item A \emph{domain} $U \Times X \subseteq \RealSet^{m+n}$ for the valuation of the parameters and variables.
\item An \emph{initial domain} $X_\Init \subseteq X$.
\item A \emph{vector fields} $\Flow : U \Times X \to \RealSet^n$ (assuming Lipschitz continuity).
%\item A set of initial values $\Init \subseteq X$.
  %
\end{itemize}
\end{definition}
In this work, we specify domains $U$ and $X$ as boxes.
The behaviors of a system $\CS$ are formalized as \emph{signals}.
\begin{definition}
	Given a time interval $\t = [0,\UB{t}] \in \IntSet$ and a parameter value $\tilde{u} \in U$,
	a \emph{signal} of a continuous-time dynamical system $\CS$ is a function $\tilde{x} : \t \to X$ such that
    \begin{gather*}
		\tilde{x}(0) \in X_\Init \land 
		%\Exists{\tilde{u}}{U}~ 
		\ForAll{\tilde{t}}{\t} ~ \tfrac{d}{dt}\tilde{x}(\tilde{t}) = F(\tilde{u}, \tilde{x}(\tilde{t})).
    \end{gather*}
\end{definition}
$\Sigs_{\UB{t}}(\CS)$ denotes the set of signals of $\CS$ of length $\UB{t}$.

\begin{example} \label{ex:rotation}
	An anticlockwise rotation of a 2D particle can be modeled as a continuous-time dynamical system:
	\begin{align*}
		u &:= (u_1), \quad U := ([-0.1,0.1]), \\
		x &:= (x_1,x_2), \quad X := [-10,10]^2, \\
		X_\Init &:= \{(1, 0)\}, \\
		F(u,x) &:= \begin{pmatrix}
		u_1 & -1 \\
		1 & u_1
		\end{pmatrix}
		\begin{pmatrix}
			x_1 \\ x_2
		\end{pmatrix}.
	\end{align*}
A signal of this example is illustrated in Figure~\ref{f:rotation}.
The signal moves on the circle of radius 1 when $u_1 = 0$; the system is stable when $u_1 \leq 0$ and is unstable when $u_1 > 0$.
\end{example}

\begin{figure}[th]
\centering
\vspace{-1em}
\includegraphics[width=\linewidth]{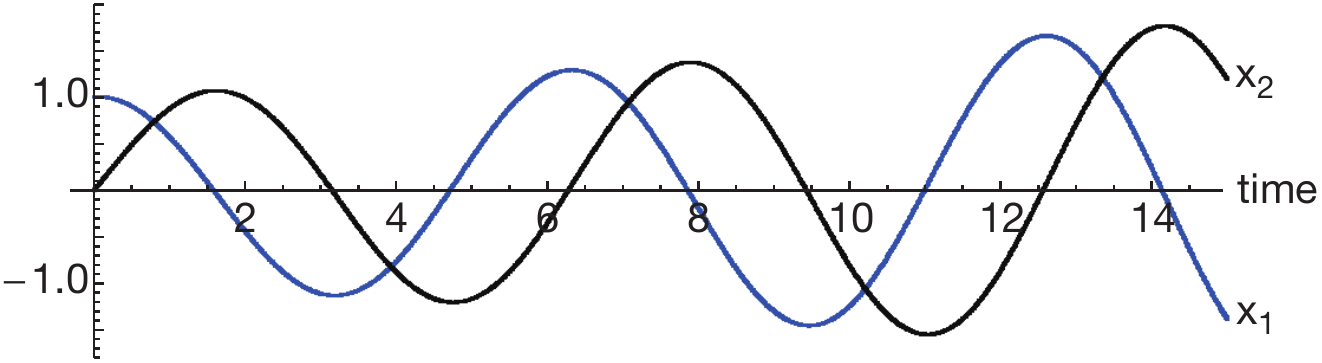}
\caption{A signal of the rotation system}
\label{f:rotation}
\end{figure}

\begin{example} \label{ex:lorenz}
	A well-known chaotic dynamical system is the Lorenz equation:
	\begin{align*}
		u &:= (u_1,u_2,u_3), ~ U := (10,28,2.5)+[-1,1]^3, \\
		x &:= (x_1,x_2,x_3), \quad X := [-50,50]^3, \\
		X_\Init &:= \{(15, 15, 36)\}, \\
		F(u,x) &:= 
		\begin{pmatrix}
			u_1(x_2-x_1) \\  x_1(u_2-x_3) - x_2 \\ x_1 x_2 - u_3 x_3
		\end{pmatrix}.
	\end{align*}
A signal of this system is illustrated in the upper part of Figure~\ref{f:lorenz}.
\end{example}

\begin{figure}[th]
\centering
%\vspace{-1em}
\includegraphics[width=\linewidth]{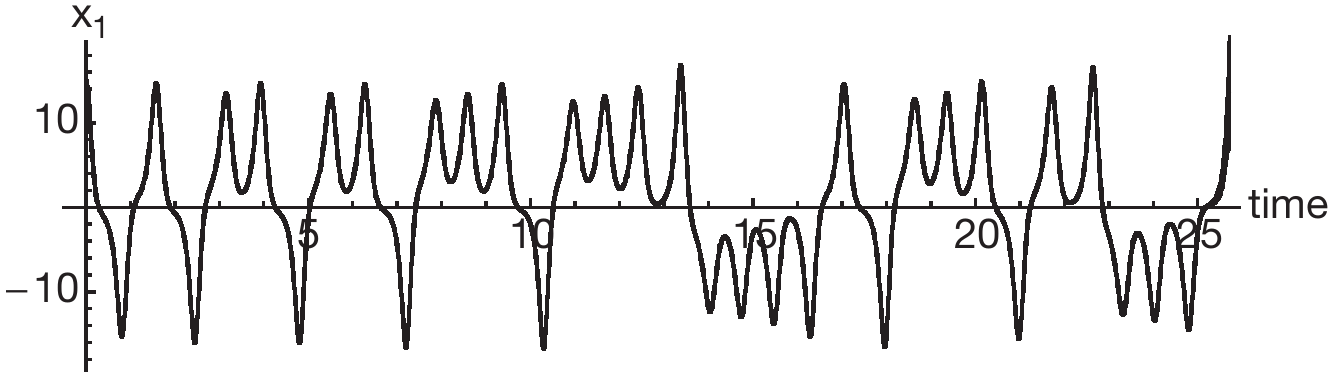}
\includegraphics[width=\linewidth]{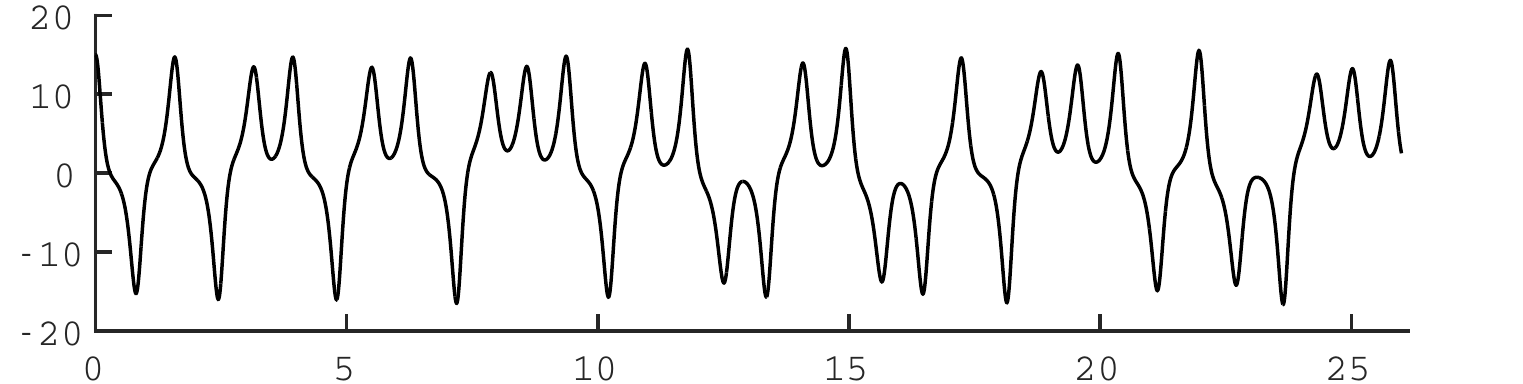}
\caption{Signals of the Lorenz system simulated by validated (upper) and non-validated (lower) numerical methods}
\label{f:lorenz}
\end{figure}

\subsection{ODE Integration using Interval Analysis}
\label{s:ode}

Using tools based on interval Taylor methods, such as CAPD\footnote{\url{http://capd.ii.uj.edu.pl/}} and VNODE~\cite{Nedialkov2006}, we can obtain an interval extension $\XC : \PosIntSet\to\IntSet^n$ of signals in $\Sigs_{\UB{t}}{\Struct{\CS}}$.
Given $\t \in {\PosIntSet}$, these tools perform stepwise integration of the flow function $F$ from the initial time $0$ to time $\UB{t}$, and output the value $\XC(\t)$.
At each step, interval Taylor methods verify the \emph{unique existence} of a solution in a box enclosure using the Picard-Lindel\"{o}f operator and Banach's fixpoint theorem.
Accordingly, when an interval enclosure $\XC(\t)$ is computed by an interval Taylor method, the following property holds:
\begin{gather*} \label{e:ode:unique}
	\ForAll{u}{U} ~ \ForAll{x_\Init}{X_\Init} ~ 
	\UExists{\tilde{x}}{\Sigs_{\UB{t}}(\CS)} ~ \\ \tilde{x}(0)=x_\Init \LAnd 
	\ForAll{\tilde{t}}{\t}~
	\tfrac{d}{dt} \tilde{x}(\tilde{t}) = F(u,\tilde{x}(\tilde{t})),
\end{gather*}
where $\exists!$ is interpreted as ``uniquely exists.''

In principle, if $F$ is Lipschitz continuous and we can assume arbitrary precision, we obtain an arbitrarily narrow interval enclosure $\XC([t])$ for $t \in \PosRealSet$.
%whose width is less than $\delta \in \PosRealSetS$.
However, because interval Taylor methods are implemented using machine-representable real numbers, they may fail to compute an enclosure when verifying the unique existence property, even at the smallest step size.

\begin{example} \label{ex:lorenz:sig}
	Signals of the Lorenz system in Example~\ref{ex:lorenz} (when $u := (10,28,2.5)$), computed with an interval method (CAPD) and a non-validated numerical method, are illustrated in Figure~\ref{f:lorenz}.
	Non-validated numerical methods may compute a wrong signal for a chaotic system as shown in this figure. 
	On the other hand, validated simulation of this system over a long period is difficult with double-precition floating-point numbers;
	the width of the interval enclosure computed by CAPD blows up after $25$ time units and the simulation fails.
\end{example}

\begin{figure*}[t]
%\centering
\includegraphics[width=.9\linewidth]{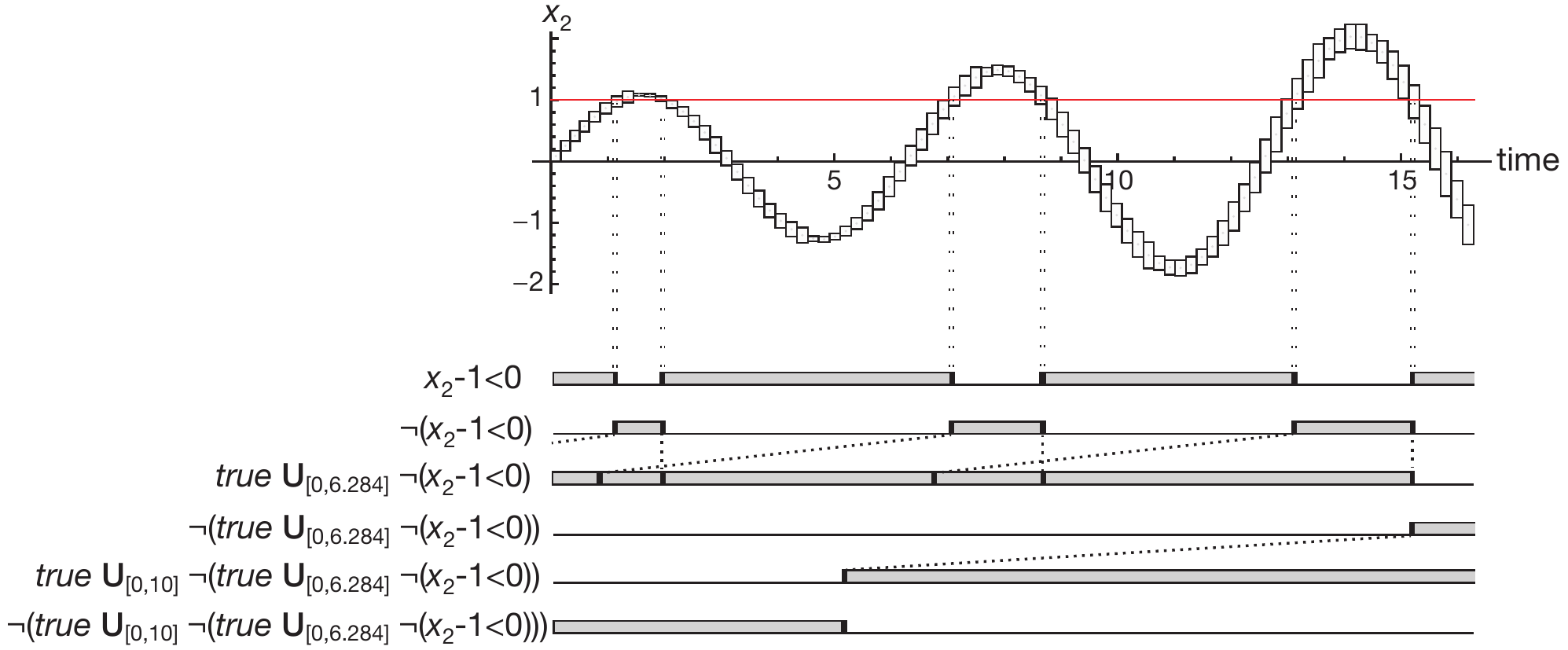}
\caption{Monitoring process on the rotation system}
\label{f:rotation:proc}
\end{figure*}

\section{Signal Temporal Logic}
\label{s:stl}

We consider a fragment~\cite{Maler2003} of the real-time metric temporal logic~\cite{Alur1996} whose temporal modalities are bounded by an interval $\t = [\LB{t},\UB{t}]$, where the bounds $\LB{t},\UB{t}$ are in $\PosRatSet$.
Following \cite{Maler2003}, we refer to this logic as the \emph{signal temporal logic} (STL).

\begin{definition}
We consider constraints in the real domain as atomic propositions.
The syntax of the STL formulae is defined by the grammar
\begin{align*}
	\phi ::=&~ \True ~|~ \Prop ~|~ \phi \lor \phi ~|~ \neg \phi ~|~ \phi\,\Until_{\ttt}\,\phi \\
	p ::=&~ f(x) < 0 %~|~ f(x) \leq 0
\end{align*}
where $\Prop$ belongs to a set of \emph{atomic propositions} $\APSet_\phi$, $\Until_{\ttt}$ is the ``until'' operator bounded by a non-empty positive time interval $\t \in {\PosIntSet}$, 
%and $\t$ be an interval in $\PosBnd{\RatIntSet}$.
$x = (x_1,\ldots,x_n)$ is a vector of variables, and $f : \RealSet^{n} \to \RealSet$.
We use the standard abbreviations, e.g., $\phi_1\land\phi_2 := \neg(\neg\phi_1\lor\neg\phi_2)$, $\Eventually_\ttt \phi := \True\,\Until_\ttt\,\phi$ (``eventually''), and $\Always_\ttt\phi := \neg\Eventually_\ttt\neg\phi$ (``always'').
%
%An equation $f(x) = 0$ can be encoded as $f(x) \leq 0 \land -f(x) \leq 0$.
\end{definition}

\subsection{Semantics}

The necessary length $\Norm{\phi}$ of the signals for checking an STL formula $\phi$ is inductively defined by the structure of the formula:
\begin{align*}
	\Norm{p} &:= 0, & 
	\Norm{\phi_1 \lor \phi_2} &:= \Max\,(\Norm{\phi_1},\Norm{\phi_2}), \\
	\Norm{\neg \phi} &:= \Norm{\phi}, &
	\Norm{\phi_1 \,\Until_{\ttt}\, \phi_2} &:= \Max\,(\Norm{\phi_1},\Norm{\phi_2}) + \UB{t}.
\end{align*}
The map $\Obs : \APSet_\phi \to \PWS{X}$ associates each proposition $\Prop \in \APSet_\phi$ to a set $\Obs(\Prop) = \{x \!\in\! X ~|~ p(x)\}$.

When we check the satisfiability of $\phi$ at time $t$, we should have a signal of length $t_\Max := \Norm{\phi}+t$ (this value of $t_\Max$ is used in evaluating all subformulae of $\phi$).
Let $\tilde{x} \in \Sigs_{t_\Max}(\CS)$ and $\phi$ be an STL property.
Then, we have a satisfaction relation defined as follows:
\begin{align*}
	\Sig,t &\models \True &&\\
	\Sig,t &\models \Prop && \text{iff}~~ \Sig(t) \in \Obs(\Prop)\\
	\Sig,t &\models \phi_1 \lor \phi_2 && \text{iff}~~ \Sig,t \models \phi_1 \LOr \Sig,t \models \phi_2\\
	\Sig,t &\models \neg \phi && \text{iff}~~ \Sig,t \not\models \phi\\
	\Sig,t &\models \phi_1 \,\Until_\ttt\, \phi_2 \\
	\omit\rlap{~\text{iff $\Exists{t'}{(t+\t)} ~ \Sig,t' \models \phi_2 \LAnd 
	%& && \qquad 
	(\ForAll{t''}{[t,t']} ~ \Sig,t'' \models \phi_1)$}}
\end{align*}
At a given time $t$, $\phi_1\,\Until_\ttt\,\phi_2$ intuitively means that $\phi_2$ holds within the time interval $t\!+\!\t$ and that $\phi_1$ always hold until then.
We also have the following validation relation:
\[
	\CS \models \phi ~~\text{iff}~~ \ForAll{\Sig}{\Sigs_{\Norm{\phi}}(\CS)} ~ \Sig,0 \models \phi
\]

\subsection{Method for Monitoring STL Formulae}
\label{s:stl:monitoring}

Our interval method is based on the monitoring method proposed in \cite{Maler2003}, which decides whether a signal satisfies an STL property based on the numerical simulation of signals of bounded lengths.
This section explains this basic method.
First, we introduce the notion of consistent time intervals in the STL evaluation.
\begin{definition} \label{th:consistent}
	Let $\Sig$ be a signal of length $t_\Max$ and $\phi$ be an STL formula.
	We say that a left-closed and right-open interval $[\LB{t},\UB{t}) \subseteq \PosRealSet$ is \emph{consistent} with $\phi$ iff $\ForAll{t}{(\LB{t},\UB{t})}~ \Sig,t \models \phi$.%
		\footnote{The original definition~\cite{Maler2003} involves left-closed right-open time intervals $[\LB{t},\UB{t})$ so that they do not overlap and they can cover $[0,t_\Max]$. However, $\tilde{x}(t)>1 \ \equiv\ 1-\tilde{x}(t)<0$, with $\tilde{x}(t) := t$, is not true in the left-closed right-open interval $[1,2)$. In this paper, we only enforce the predicate to be true in the interior of time intervals $(\LB{t},\UB{t})$ to regard $[1,2)$ consistent. This has no impact on the soundness nor efficiency of the proposed method, since such bounds will be approximated by enclosing intervals in Definition~\ref{d:acti}.}
\end{definition}

Next, whether a signal satisfies property $\phi$ is checked as follows:
\begin{enumerate}
	\item For each atomic proposition $p$ in $\APSet_\phi$, monitor the signal of length $\Norm{\phi}$ and identify a non-overlapping set of consistent time intervals $T_p = \{\t_1,\ldots,\t_{n_p}\}$.
\item Following the parse tree of $\phi$ in a bottom-up fashion, obtain a set of consistent time intervals of $\phi$. For each construct of STL, obtain the set that is consistent with the sub-formula as follows:
\begin{align}
	T_{\neg \phi} &:= 
	%\AlgName{Invert}(T_p) 
	\PosRealSet \setminus T_\phi \label{e:op:neg} \\
	T_{\phi_1\lor \phi_2} &:= T_{\phi_1} \cup T_{\phi_2} \label{e:op:lor} \\
	T_{\phi_1 \Until_{\ttt} \phi_2} &:= \nonumber \\
	\omit\rlap{\quad $\{\AlgName{Shift}_{\ttt}(\t_1 \cap \t_2) \cap \t_1 ~|~
	\t_1 \in T_{\phi_1}, \t_2 \in T_{\phi_2} \}$} \label{e:op:until}
\end{align}
where $\AlgName{Shift}_{\ttt}(\s) := [\LB{s}-\UB{t}, \UB{s}-\LB{t}) \cap \PosRealSet$.
\item Check whether $\Min\ T_\phi$ contains time 0.
	If yes, $\phi$ is satisfied; otherwise, it is not satisfied.
\end{enumerate}

\begin{example} \label{ex:stl}
	We consider the property 
	\begin{multline*}
		\Always_{[0,10]} \Eventually_{[0,6.284]}\, \neg (x_2\!-\!1 \!<\! 0) ~\equiv~\\
		\neg( \True \,\Until_{[0,10]}\, \neg ( \True \,\Until_{[0,6.284]}\, \neg(x_2\!-\!1 \!<\! 0) ))
	\end{multline*}
	for the model in Example~\ref{ex:rotation}, which describes that, within the initial $10$ time units, the signal $x_2$ increases beyond $1$ within every $6.284$ time units.
	Verification with the monitoring method (extended to an interval method) is illustrated in Figure~\ref{f:rotation:proc}, when the parameter is set as $u_1 := 0.1+[-10^{-3},10^{-3}]$.
\end{example}

\section{Interval-Based Monitoring Method}
\label{s:method}

In this section, we propose a reliable method for monitoring STL properties on continuous-time dynamical systems. This method is an interval extension of the monitoring method described in Section~\ref{s:stl:monitoring}.

Given a system $\CS$ and an STL property $\phi$, the proposed $\AlgName{MonitorSTL}$ algorithm (Algorithm~\ref{a:main}) outputs the following results:
$\Valid$ (implying that $\CS \models \phi$),
$\Unsat$ (implying that $\CS \models \neg\phi$), or
$\Unknown$ (meaning that the computation is inconclusive).
The algorithm implements the method described in Section~\ref{s:stl:monitoring}.
The sub-procedures $\AlgName{MonitorAP}$ (Section~\ref{s:method:map}) for monitoring atomic propositions, and $\Propagate$ and $\AlgName{ConsistentAtInitTime}$ (Section~\ref{s:method:propag}) for evaluating an STL formula are rendered rigorous and sound by interval analysis;
namely, the precision of every numerical computation is guaranteed, and the correctness of the monitoring method is assured by verifying the unique existence of a solution within its resulting interval.
Any errors introduced by the sub-procedures are captured by the catch clause at Line~5.

\begin{algorithm}[thb]
\caption{\label{a:main} $\AlgName{MonitorSTL}$ algorithm}

\begin{algorithmic}[1]
  \REQUIRE $\CS$, $\phi$
  \ENSURE $\Valid$, $\Unsat$, or $\Unknown$

  \STATE \textbf{try}
  \STATE \quad $\mathcal{T} := \AlgName{MonitorAP}(\CS, \phi)$
  \hfill\COMMENT{Step~1; Section~\ref{s:method:map}}
  \STATE \quad $\T_\phi := \Propagate(\mathcal{T}, \phi))$
  \hfill\COMMENT{Step~2; Section~\ref{s:method:propag}}
  \STATE \quad \textbf{return} {$\AlgName{ConsistentAtInitTime}(\T_\phi)$}
  \hfill\COMMENT{Step~3}
  \STATE \textbf{catch error return} $\Unknown$ \textbf{end try}
\end{algorithmic}
\end{algorithm}

Despite its efficient computational cost (Section~\ref{s:complexity}), the proposed method has some limitations.
First, the method is incomplete; it allows inconclusive computations and outputs $\Unknown$ when the interval computation is too imprecise to separate several solutions within an interval.
In practice, the $\Unknown$ output is valuable, because a numerically non-robust signal is rejected as an error in the verification process.
Second, although the algorithm validates system properties in principle, its success is guaranteed only over sufficiently small domains $U$ and $X_0$, particularly when evaluating nonlinear systems.
Third, the method is a bounded model-checking method, in the sense that the domain $U\Times X$ and the lengths of the signals are both bounded.

Our method is targeted at (i) a more generic framework, in which the possible initial and parameter values can be exhaustively enumerated, and (ii) statistical methods that treat the parameters as random variables and evaluate probabilistic STL properties.

\subsection{Approximation of Consistent Time Intervals}
\label{s:method:approx}

In this section, we introduce an interval approximation for the consistent time intervals (Definition~\ref{th:consistent}).
The basic idea is to enclose each bound of the consistent time intervals within a closed interval.
\begin{definition} \label{d:acti}
	Given a consistent time interval $\t = [\LB{t},\UB{t}) \subseteq [0,\Norm{\phi})$ that is consistent with an STL property $\phi$, we define an \emph{(interval) approximation} as a pair $(\s,\s')$ such that $\s,\s' \in\IntSet$, $\LB{t} \in \s$, and $\UB{t} \in \s'$.
	%, $\UB{s} < \LB{s}'$.
	%Intervals $[\LB{s},\UB{s}']$ and $[\UB{s},\LB{s}']$ are \emph{outer} and \emph{inner approximations} of $[\LB{t},\UB{t})$, respectively.
	%(if $\LB{s} \leq \UB{s}'$) (if $\UB{s} \leq \LB{s}'$)
\end{definition}
Given an approximation $(\s, \s')$ and a continuous signal $\Sig$, we have $\ForAll{t}{[\UB{s},\LB{s}')}~\Sig,t\models\phi$.

We now approximate a set of consistent time intervals $\{ \t_1,\ldots, \t_{n_\phi} \}$ as a set (or sequence) of approximations.
Instead of the set of pairs $\{ (\s_1, \s'_1), \AB \ldots, (\s_{n_\phi}, \s'_{n_\phi}) \}$,
we represent a set of approximations with the set $\{ (\s_1,\True), (\s'_1,\False), \ldots, \AB (\s_{n_\phi},\True), (\s'_{n_\phi},\False) \}$, where the tags $\True$ and $\False$ represent whether an element corresponds to a lower or an upper bound.
A set of approximations is interpreted as both \emph{outer} and \emph{inner approximations};
that is, each consistent time interval $\t_i$ is enclosed by the outer approximation $[\LB{s}_i,\UB{s}_i']$, and the inner approximation $(\UB{s}_i,\LB{s}'_i)$ is contained in  $\t_i$.
\begin{definition} \label{d:approxs}
	Consider a set $\T = \{\AB (\s_1,b_1), \AB \ldots, \AB (\s_{\#\mathbm{T}}, b_{\#\mathbm{T}})\}$ where $\s_i \in \IntSet$, $b_i \in \{\True,\False\}$, and $\#\T\in\NatSet$.
	The second element of each pair is a \emph{polarity value} that represents whether the pair is an enclosure of a lower or upper bound of a consistent time interval.
	We say that $\T$ is \emph{canonical} iff
	\begin{itemize}
		\item the elements can be sorted, i.e.,\\ $\ForAll{i}{\{1,\ldots,\#\T\!-\!1\}}~ \UB{s}_i < \LB{s}_{i+1}$,
		\item $\ForAll{i}{\{1,\ldots,\#\T\}}~ 0 \leq \UB{s}_i$, 
		\item $\ForAll{i}{\{1,\ldots,\#\T\!-\!1\}}~ b_i \neq b_{i+1}$, and
		\item $b_1 = \True$.
	\end{itemize}
	%Otherwise, $\T$ is \emph{non-canonical}.
	%We use a symbol $\NaN$ to denote an arbitrary non-canonical approximation.
	%
	We say that $\T$ is an \emph{(interval) approximation} of $T_\phi = \{\t_1,\ldots, \t_{n_\phi}\}$
	%, where $0 \leq \LB{t}_i<\UB{t}_i \leq \Norm{\phi}$, 
	iff
	\begin{itemize}
	\item $\T$ is canonical,
	\item $\ForAll{\t}{T_\phi}~ \Exists{(\s,\True)}{\T}~ \LB{t} \in \s$, 
	\item $\ForAll{\t}{T_\phi}~ \UB{t}<t_\Max \Rightarrow \Exists{(\s,\False)}{\T}~ \UB{t} \in \s$, 
	\item $\ForAll{(\s,\True)} {\T}~ \UExists{\t}{T_\phi}~ \LB{t} \in \s$, and
	\item $\ForAll{(\s,\False)}{\T}~ \UExists{\t}{T_\phi}~ \UB{t} \in \s$.
	\end{itemize}
	%We consider $\emptyset$ as the approximation of $T_\False=\emptyset$.
	%the symbol $\Universe$ as the approximation of $T_\True=\PosRealSet$.
	%
\end{definition}
Given a set of consistent time intervals, its canonical approximation is a disjoint sequence of lower and upper bound enclosures; the sequence starts with a lower bound enclosure and ends with either a lower or an upper bound enclosure.
For $\t\in T_\phi$ such that $\UB{t} > \Norm{\phi}$, $\T_\phi$ may contain only its lower-bound enclosure.
$\Universe := \{([0],\True)\}$ and $\T_\False := \emptyset$ are the approximations of $T_\True=\PosRealSet$ and $T_\False=\emptyset$, respectively.

\begin{example} \label{ex:approx}
	Let $\CS$ be $(x, [0,10], 0, F(x) = 1)$ such that
	the variable $x$ represents the signal $\tilde{x}(t) = t$.
	Consider a property
	%\begin{equation*}
		$\phi := \Eventually_{[0,2\pi]} (\cos x<0 \LAnd \sin x<0)$.
	%\end{equation*}
	$\Norm{\phi}$ is $2\pi$, and 
	the set of consistent time intervals within $[0,2\pi]$ is
	$T_{\cos x<0} := \{ [\frac{\pi}{2},\frac{3}{2} \pi) \}$, 
	$T_{\sin x<0} := \{ [\pi,2\pi) \}$, and
	$T_\phi := \{ [0, \frac{3}{2} \pi) \}$, respectively.
	Then, their approximations are
	\begin{align*}
		\T_{\cos x<0} & := \{([1.57,1.58],\True), ([4.71,4.72],\False)\}, \\
		\T_{\sin x<0} & := \{([3.14,3.15],\True), ([6.28,6.29],\False)\}, \\
		\T_\phi &:= \{ ([0],\True), ([4.71,4.72],\False) \}.
	\end{align*}
\end{example}

\subsection{Monitoring Atomic Propositions}
\label{s:method:map}

This section describes the $\MonitorAP$ procedure (Algorithm~\ref{a:map}) that,
given a system $\CS$ and an STL property $\phi$, computes a set $\mathcal{T}$ containing an approximated set $\T_p$ of consistent time intervals for each $p \in \APSet_\phi$.

\begin{algorithm}[t]
\caption{$\MonitorAP$ algorithm}
\label{a:map}
\begin{algorithmic}[1]
  \REQUIRE $\CS$, $\phi$
  \ENSURE $\mathcal{T}$
  %
  %\STATE $\mathcal{T} \Asn \{\ \T_{p}:=\emptyset\ \}_{p \in \APSet_\phi}$
  %\vspace{.5em}
  %
  \STATE $\mathcal{T} = \emptyset$
  \FOR{$p = f(x) < 0 \in \APSet_\phi$}%{Find AP boundaries}
  \STATE $b \Asn p(\XC(0))$
  \STATE \textbf{if} $b$ \textbf{then} $\T_p \Asn \{([0],b)\}$ \textbf{else} $\T_p \Asn \emptyset$
  \STATE $\t \Asn [0,\Norm{\phi}]$; \quad $b \Asn \neg b$
  \LOOP
  \STATE $\t \Asn \SearchZero(\XC, F, f, \t)$
  \STATE \textbf{if} $\t = \emptyset$ \textbf{then} \textbf{break} \textbf{end if}
  \STATE 
  $\T_p := \T_p \cup \{(\t,b)\}$; \quad
  $\t \Asn [\UB{t}, \Norm{\phi}]$; \quad $b \Asn \neg b$
  \ENDLOOP
  \STATE $\mathcal{T} \Asn \mathcal{T} \cup \{\T_p\}$
  \ENDFOR

  \vspace{.5em}
  \RETURN $\mathcal{T}$
\end{algorithmic}
\end{algorithm}

The outer loop enumerates each atomic proposition $p$ of the form $f(x) < 0$. %($\circ \in \{<,\leq\}$).
Lines~3--4 compute the initial polarity by evaluating the proposition at time 0; the set $\T_p$ is initialized accordingly.
Note that $\XC$ represents a solving process for the signals in $\Sigs_{\bar{t}}(\CS)$ (see Section~\ref{s:ode}), which can be regarded as a function $\PosIntSet \to \IntSet^n$.
The inner loop searches for bounds at which $f$ changes sign.
Line~7 invokes the $\AlgName{SearchZero}$ procedure (Algorithm~\ref{a:searchzero}), which searches for the \emph{earliest} bound at which $p$ switches consistency within the time interval $\t$, and outputs a sharp enclosure of the bound (or $\emptyset$ if there is no solution).
This result is stored in the set $\T_p$, and $\T_p$ is stored in $\mathcal{T}$.
%
%
%At Lines~20--21, the discrete change between the locations $q$ and $q''$ is computed by evaluating an interval extension of $\Rst_{q,q''}(\XC(\t_c))$.
%A jump of state might switch the state of an atomic proposition; if such a switch exists, the $\Jump$ procedure should detect and record it in $\T$.
%
%Finally, boundary points (which are enclosed by intervals) of the consistent time intervals saved in $\T$ are analyzed by the $\AlgName{AnalIntervals}$ procedure (Line~24, Section~\ref{s:method:bltl}).
%If the containment of zero in $\T_\phi$ is confirmed, $\Valid$ is returned.
%

\begin{example}
	For $\phi$ in Example~\ref{ex:approx}, $\MonitorAP$ computes
	$\mathcal{T}$ as $\{\T_{\cos x_1<0}, \AB \T_{\sin x_1<0}\}$.
\end{example}

%\subsection{Boundary Search}
%\label{s:method:zero}

The evaluation of atomic propositions $f(x) < 0$ switches between $\True$ and $\False$ at the root of $f : X\to\RealSet$.
%in the state space $X \subseteq \RealSet^n$.
%\[
%	B(x) \ :=\ g(x) = 0 \LAnd h(x) < 0,
%\]
%where $g : X \to \RealSet$ and $h : X \to \RealSet$.
%
The intersection between a signal $\Sig(t)$ and a boundary condition $f(x) = 0$ is searched by Algorithm~\ref{a:searchzero}.
As inputs, this algorithm accepts an interval extension of the signal $\XC : \PosIntSet\to\IntSet^n$, a vector field $F : X \to X$, the function $f$, and a time interval $\t_\mathrm{init} \in\PosIntSet$ to be searched.
The algorithm computes the time interval $\t \subseteq \t_\mathrm{init}$ that encloses the earliest root, i.e., %of $f(\XC(\cdot))$.
\begin{multline} \label{e:zero:earliest}
	\t = \Box \bigl\{~ \Min \{t \!\in\! \t_\mathrm{init} ~|~ f(\Sig(t))=0\} \\
		~|~ \ForAll{\Sig}{\Sigs_{\bar{t}_\mathrm{init}}(\CS)} ~\bigr\}.
\end{multline}
$\SearchZero$ verifies that $\t$ encloses a unique bound, i.e.,
\begin{equation} \label{e:zero:unique}
	\ForAll{\Sig}{\Sigs_{\UB{t}_\mathrm{init}}(\CS)}~ \UExists{t}{\t}~ f(\Sig(t)) = 0.
\end{equation}
Alternatively, if no bound exists in $\t_\mathrm{init}$, $\SearchZero$ verifies the following:
\begin{equation} \label{e:zero:unsat}
	\ForAll{\Sig}{\Sigs_{\UB{t}_\mathrm{init}}(\CS)}~ \ForAll{t}{\t_\mathrm{init}}~ f(\Sig(t)) \neq 0.
\end{equation}

\begin{algorithm}[t]
\caption{\label{a:searchzero} $\SearchZero$ algorithm}

\begin{algorithmic}[1]
  \REQUIRE $\XC : \PosIntSet\to\IntSet^n$, 
     $F : X \to X$, 
	 %$g\!=\!0 \land h\!<\!0$, 
	 $f : X \to \RealSet$, 
	 $\t_\mathrm{init} \in \PosIntSet$
	 \ENSURE $\t\in\IntSet$ %\cup\{\emptyset\}$
  \PARAM %$t_\Max\in\PosRealSetS$, 
  $\epsilon \in \SPosRatSet$, $\theta \in (0,1) \subset \RatSet$%\tau=1.01, \theta=0.99$
  \STATE $\t \Asn \t_\mathrm{init}$ %$d \Asn \infty$
  %\vspace{.5em}
  \REPEATC{Lower bound reduction}
    \STATE $\t_\mathrm{bak} \Asn \t$
	%\STATE $\d_g \Asn \Dt(g, \XC, F, \t)$;~ $\d_h \Asn \Dt(h, \XC, F, \t)$
	\STATE $\d \Asn \Dt(f, \XC, F, \t)$
	\STATE $\t \Asn \LB{t}+\ExtDiv(-\f(\XC(\LB{t})),\ \d,\ \t-\LB{t})$
	%\STATE $\t \Asn \LB{t}-\ExtDiv(\h(\XC(\LB{t})) + [0,\infty],\ \d_h, \t-\LB{t})$
  \UNTIL{$d(\t_\mathrm{bak},\t) \leq \epsilon$}
  \STATE \textbf{if} $\t = \emptyset$ \textbf{then return} $\emptyset$ \textbf{end if}
  \vspace{.5em}
  \STATE $\t \Asn \LB{t}$; \quad $\Delta \Asn \infty$
  \LOOPC{Unique solution existence verification}
	\STATE $\d \Asn \Dt(f, \XC, F, \t)$
    \STATE \textbf{if} $\d \ni 0$ \textbf{then error end if}
    \STATE $\t' \Asn \LB{t}-{\f(\XC(\LB{t}))}/{\d}$
	\STATE \textbf{if} $\t' \subseteq \Inter{\t}$
	  \textbf{then} $\t \Asn \t'$; \textbf{break end if}
      %\STATE $\t_\mathrm{bak} \Asn \t$
	\STATE $\Delta_\mathrm{bak} := \Delta$; \quad $\Delta := d(\t,\t')$
	\STATE $\t \Asn \t_\mathrm{init} \cap \AlgName{Inflate}(\t',1\!+\!\theta)$
	\STATE \textbf{if} $\Delta \geq (1\!-\!\theta)\, \Delta_\mathrm{bak}$ \textbf{then error end if}
  \ENDLOOP
  %
  %\STATE \textbf{if} $\sup\t = \sup[t_\mathrm{bak}]$ \textbf{then return} $\t$\;
  %\REPEATC{Upper bound reduction.}
  %  \STATE $[t_\mathrm{old}] \Asn \t$;
  %  \STATE $\t \Asn \UB{t}-\ExtDiv(\h(\XC(\UB{t})),\ [D_t h](\XC(\cdot),\t),\ \t)$;
  %\UNTIL{$d([t_\mathrm{old}],\t) \leq \epsilon$}
  %
  \vspace{.5em}
  %\STATE \textbf{if} $\sup \h(\XC(\t)) \geq 0$ \textbf{then error end if}
  \RETURN $\t$
\end{algorithmic}
\end{algorithm}

\begin{theorem}[Soundness]
	If $\SearchZero$ returns an interval $\t \neq \emptyset$, properties \eqref{e:zero:earliest} and \eqref{e:zero:unique} hold.
	If it returns $\emptyset$, property \eqref{e:zero:unsat} holds.
\end{theorem}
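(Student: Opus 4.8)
The plan is to observe that, although $\SearchZero$ reasons about the whole family $\Sigs_{\UB{t}_\mathrm{init}}(\CS)$ at once, it is really a single application of the interval Newton machinery of Section~\ref{s:interval} to the scalar maps $g_\Sig(t) := f(\Sig(t))$. First I would record that each $g_\Sig$ is continuously differentiable, with $\tfrac{d}{dt} g_\Sig$ given by the chain rule applied to $f$ and the governing ODE, and that the two enclosures used in the algorithm are \emph{uniform} over the family: $\f(\XC(\LB{t}))$ encloses $g_\Sig(\LB{t})$, and $\d = \Dt(f,\XC,F,\t)$ encloses $\tfrac{d}{dt} g_\Sig(t)$, for every signal $\Sig$ and every $t \in \t$. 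The latter is the defining containment property of $\Dt$, inherited from the interval extensions of $f$, its gradient, $F$, and $\XC$. Consequently the pair $(\f(\XC(\cdot)),\d)$ is a legitimate interval extension of each individual $g_\Sig$, so every property of the interval Newton step from Section~\ref{s:interval} holds for all $\Sig$ simultaneously, with a single common output interval.

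With this reduction I would treat the two return points in turn. The lower-bound reduction at Line~5 is exactly the extended Newton operator expanded at the point $\LB{t}$, so by the containment properties of $\ExtDiv$ and of the Newton operator its image keeps every root of every $g_\Sig$ lying in the current $\t$. Hence if the iteration contracts $\t$ to $\emptyset$ and the algorithm returns at Line~7, no $g_\Sig$ has a root in $\t_\mathrm{init}$, which is \eqref{e:zero:unsat}. If instead it returns at Line~19, it has left the verification loop through the break at Line~13, where $\t' \subseteq \Inter{\t}$ holds and $0 \notin \d$ (since Line~11 did not raise an error). This is precisely the uniqueness hypothesis of Section~\ref{s:interval}, so applied to each $g_\Sig$ it yields a unique $t \in \t'$ with $g_\Sig(t) = 0$; as $\t'$ is the common returned interval, this is \eqref{e:zero:unique}.

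It then remains to upgrade ``a unique root in $\t$'' to ``the earliest root,'' i.e.\ \eqref{e:zero:earliest}. I would carry, as an invariant of both loops, that $\t \subseteq \t_\mathrm{init}$ and that $[\LB{t}_\mathrm{init}, \LB{t})$ is certified root-free for every signal. Phase~1 maintains and sharpens this invariant: by the containment property of $\ExtDiv$ each Newton step keeps every root of every $g_\Sig$ inside $\t$, so whatever is trimmed from $\t$ is root-free, and the $\epsilon$-stopping criterion leaves $\LB{t}$ within $\epsilon$ of the infimum over the family of the earliest roots. In phase~2, since $0 \notin \d$ on the current $\t$, each $g_\Sig$ is strictly monotone there and so has at most one root in $\t$; combining this with the invariant and with the existence guarantee of the Newton test places that single root in $\t' \subseteq \t$ and certifies it as the earliest root of $g_\Sig$ in $\t_\mathrm{init}$. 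Taking the hull $\Box$ over the family then shows that $\t$ is a sharp enclosure of $\Box\{\Min\{t \in \t_\mathrm{init} \mid f(\Sig(t)) = 0\} \mid \Sig \}$, which is \eqref{e:zero:earliest}.

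I expect the main obstacle to be this last, ``earliest,'' part: propagating the root-free invariant intact through the inflation at Line~15, where replacing $\t'$ by its $1\!+\!\theta$ inflation intersected with $\t_\mathrm{init}$ may move the lower bound relative to the value produced by phase~1. The crux is to show that the prefix certified root-free in phase~1 and the strict-monotonicity region analysed in phase~2 jointly cover $[\LB{t}_\mathrm{init}, \LB{t}')$ with no root of any signal, so that no signal's earliest crossing is skipped. A secondary technical point is the extended-division case $0 \in \d$ in phase~1, where the Newton step need not contract but must still be shown root-preserving; and I would note explicitly that the error exits at Lines~11 and~16 are irrelevant to soundness, since they are not return statements and only lead, via the catch clause of Algorithm~\ref{a:main}, to an $\Unknown$ output.
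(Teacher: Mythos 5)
Your proposal is correct and takes essentially the same approach as the paper's own justification, which is precisely this walkthrough of Algorithm~\ref{a:searchzero}: the chain-rule enclosure $\d$ valid uniformly over all signals, the root-preserving lower-bound contraction via extended division (so that an empty result yields \eqref{e:zero:unsat}), and the interval Newton inclusion test $\t' \subseteq \Inter{\t}$ with $0 \notin \d$ giving \eqref{e:zero:unique}. If anything, your account is more rigorous than the paper's, which simply asserts that reaching Line~18 makes $\t$ ``a sharp enclosure of the first zero'' --- the root-free prefix invariant and its interaction with the inflation step, which you rightly flag as the crux of \eqref{e:zero:earliest}, are not addressed in the paper at all.
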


To justify the soundness os $\SearchZero$, we describe some details of Algorithm~\ref{a:searchzero}. 
Lines~2--6 repeatedly filter the time interval $\t$ using the interval Newton operator.
Line~4 (and Line~10) invokes the $\Dt$ procedure, which is given a function $f$ and computes an interval enclosure of the derivative $\frac{d}{dt}f(\Sig(t))$ over $\t$ using the chain rule
\[
	\tfrac{d}{dt}f(\Sig(\t)) \!=\! \tfrac{d}{dx}f(\Sig(\t)) \cdot \tfrac{d}{dt}\Sig(\t)
	\subseteq \f'(\XC(\t)) \cdot \F(\XC(\t)).
\]
Next, at Line~5, the interval Newton operator is applied.
To handle the numerator interval $\d$ containing zero, we implement the interval Newton by the extended division described in Section~\ref{s:interval}.
Because we expand the interval Newton on the lower bound $\LB{t}$ and the extended division encloses the values in the $\t-\LB{t}$ domain, the resulting $\t$ is filtered its inconsistent portion without losing the solutions or being expanded.
%With this implementation, we obtain a single interval as a result and there is no need to search the earliest solution in $\t$.
%
If the interval Newton returns $\emptyset$, $\SearchZero$ also returns $\emptyset$ to signal the unsatisfiability (Line~7).

Because $\t$ may contain several solutions, 
Line~8 of the algorithm resets $\t$ to the lower bound as a starting value for computing the enclosure of the earliest solution.
Then, $\SearchZero$ checks that the time interval contains a unique solution.
To this end, it applies the interval Newton with the inclusion test to prove the unique existence of a solution within the contracted interval $\t'$ (Lines~9--17).
The interval Newton verification is repeated with an inflation process of the time interval (see \cite{Goldztejn2010:RC} for a detailed implementation).
If Line~18 is reached with no error, the time interval $\t$ is a sharp enclosure of the first zero of $f(\Sig(t))=0$. 
%It remains to check that the inequality constraint $h(\Sig(t))<0$ is satisfied inside $\t$.

When $\SearchZero$ is implemented with machine-representable real numbers or when there is a tangency between the signal and the boundary condition, an error may result.
%
%Computation of $\XC$ may fail in the process that verifies the unique existence property, even with the smallest step size.
%
%Termination of the iterated interval Newton method is not guaranteed.
Line~11 of $\SearchZero$ outputs an error if the derivative on an (inflated) time interval contains zero.
At Line~16, we limit the number of iterations by specifying a threshold $1\!-\!\theta$ for the inflation ratio between two consecutive contraction amounts as in \cite{Goldztejn2010:RC}.

%\subsection{Backshifting of Signaling Time Intervals}
%\label{s:method:shift}

\subsection{Evaluation of STL Properties}
\label{s:method:propag}

We now describe the procedures for evaluating STL formulae at Lines~3 and 4 of Algorithm~\ref{a:main}.
Propagation of a set of monitored time intervals that are consistent with the atomic propositions is implemented as a rigorous and sound but incomplete procedure.

To evaluate the approximated sets, we extend the evaluation procedure on sets of consistent time intervals described in Section~\ref{s:stl:monitoring}.
Algorithm~\ref{a:propagate} implements Step~2 of the procedure, which propagates the STL formulae over a set of time intervals.

\begin{algorithm}[t]
\caption{$\Propagate$ algorithm}
\label{a:propagate}

\begin{algorithmic}[1]
  \REQUIRE $\phi$, $\mathcal{T} = \{\T_p\}_{p \in \APSet_\phi}$
  \ENSURE $\T_\phi$

  \STATE \textbf{switch} $\phi$
  \STATE \textbf{case} $p$ : 
  \STATE \qquad \textbf{return} {$\T_p$}
  \STATE \textbf{case} $\neg \phi'$ : 
  \STATE \qquad \textbf{return} $\AlgName{Invert}(\Propagate(\phi', \mathcal{T}))$
  \STATE \textbf{case} $\phi_1 \lor \phi_2$ : 
  \STATE \qquad $\T_{1} := \Propagate(\phi_1, \mathcal{T})$
  \STATE \qquad $\T_{2} := \Propagate(\phi_2, \mathcal{T})$
  \STATE \qquad \textbf{return} {$\AlgName{Join}(\T_{1}, \T_{2})$}
  \STATE \textbf{case} $\phi_1 \Until_\ttt \phi_2$ : 
  \STATE \qquad $\T_{1} := \Propagate(\phi_1, \mathcal{T})$
  \STATE \qquad $\T_{2} := \Propagate(\phi_2, \mathcal{T})$
  %\STATE \qquad $\T := \AlgName{ShiftAll}_\ttt( \AlgName{Intersect}(\T_{1}, \T_{2}) )$
  %\STATE \qquad \textbf{return} {$\AlgName{Intersect}(\T, \T_{1})$}
  %\FOR{$((\t_1,\True),(\t_1',\False)) \in \T_{1}$; $((\t_2,\True),(\t_2',\False)) \in \T_2$}
  %\STATE $$
  %\ENDFOR
  \STATE \qquad \textbf{return} $\AlgName{ShiftAll}_\ttt(\T_{1}, \T_{2})$
  \STATE \textbf{end switch}
\end{algorithmic}
\end{algorithm}

We now handle the approximated sets by extending the operations \eqref{e:op:neg}--\eqref{e:op:until} on sets of time intervals.
The procedures for the operations $\Invert$, $\JoinOp$, $\Intersect$, and $\ShiftAll$ are described in Figure~\ref{f:operations} in the appendix.
Note that, some operations cause ambiguities when handling non-canonical approximated time intervals.
Such a situation is exemplified below.
%illustrated in Figure~\ref{f:overlap}; 
To avoid these ambiguities, our implementation results in an error once a resulting set becomes non-canonical.%
\footnote{
To output $\Unknown$ only due to the insufficient precision of numerical computation,
the procedure should branch the process and proceed evaluation for both cases;
implementation of such a procedure remains as a future work.}

\begin{example}
Consider the same timer system as in Example~\ref{ex:approx}, i.e. $\tilde{x}(t) := t$,
and the property 
%$\phi_1 := \Eventually_{[0,2]} \neg (x-1)^2 < 0$ and
$\phi := \Eventually_{[0,\UB{t}]} \neg (x-1 < 0 \lor 1-x < 0)$, where $\UB{t} \in \PosRealSetS$.
The subformula %$(x-1)^2 < 0$ and 
$x-1 < 0 \lor 1-x < 0$ is consistent at every time except at $t = 1$, therefore, the set of consistent time intervals is $T := \{[0,1),[1,t_\Max)\}$.
Assume $T$ is approximated with a non-canonical set $\{([0],\True),([0.95,1.1],\False),([0.9,1.05],\True)\}$.\footnote{The bound enclosures are usually very accurate, but kept large on this example to emphasize their impact.}
To verify $\phi$, the procedures $\Invert$ and $\AlgName{ShiftAll}_{[0,\UB{t}]}$ should be applied.
However, as illustrated in Figure~\ref{f:overlap}, we cannot decide whether the overlapping boundary intervals should be removed and expanded, or separated, and $\Propagate$ results in an error.\footnote{The property $\Eventually_{[2,3]} \neg (x-1)^2 < 0$ is verified in the same way. The set $T$ is consistent with the atomic proposition $(x-1)^2 < 0$. The verification will result in an error when $\SearchZero$ computes an enclosure of $T$ at time 1.} This ambiguous situation is avoided by using only canonical approximations. Note that, in some cases, this local ambiguity does not impact the global consistency. In the case of this example, if $\UB{t} < 0.9$, both scenarios lead to $\False$, and forking the resolution process would be able to resolve the local ambiguity.

\begin{figure}[ht]
	\vspace{-1em}
\centering
\includegraphics[width=.8\linewidth]{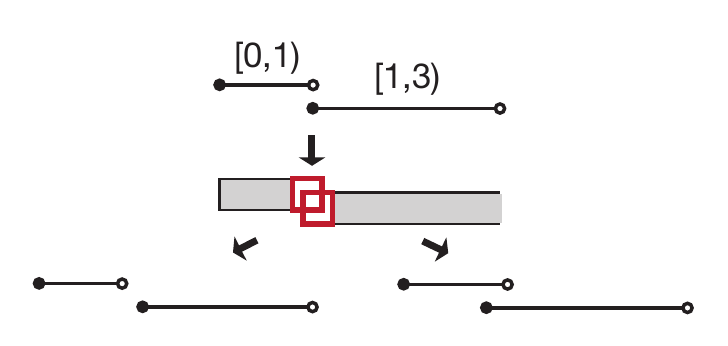}
\caption{Ambiguity caused by overlapping bounds}
\label{f:overlap}
\end{figure}
\end{example}

The following claims state that the procedures are closed in the canonical approximated sets, and the $\Propagate$ procedure is sound.

\begin{lemma} \label{th:canonical}
	Let $\T_1$ and $\T_2$ be canonical approximated sets.
	If $\T$ results from $\Invert(\T_1)$, $\JoinOp(\T_1,\T_2)$, $\Intersect(\T_1,\T_2)$, or $\ShiftAll(\T_1,\T_2)$, and if no $\mathbf{error}$ occurs in these procedures, then $\T$ is canonical.
\end{lemma}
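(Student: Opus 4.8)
The plan is to prove the lemma by structural case analysis, establishing for each of the four operations $\Invert$, $\JoinOp$, $\Intersect$, and $\ShiftAll$ that the output satisfies the four defining conditions of canonicity from Definition~\ref{d:approxs}: sortedness ($\UB{s}_i < \LB{s}_{i+1}$), nonnegativity ($0 \le \UB{s}_i$), alternating polarities ($b_i \neq b_{i+1}$), and the leading-polarity condition $b_1 = \True$. Since each operation is defined by an explicit procedure (Figure~\ref{f:operations} in the appendix), I would work directly from those definitions, assuming throughout that the inputs $\T_1,\T_2$ are already canonical and that no $\mathbf{error}$ is raised --- recalling that the implementation deliberately raises an error precisely when a result would become non-canonical, so the ``no error'' hypothesis is what licenses us to conclude canonicity.

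First I would handle $\Invert$, which implements negation \eqref{e:op:neg}, i.e. the complement $\PosRealSet \setminus T_\phi$. Inverting swaps each polarity tag and shifts the alternation by one, so the key observations are that the sorted order of the bound enclosures is unchanged, the polarities remain alternating (swapping $\True\leftrightarrow\False$ preserves $b_i\neq b_{i+1}$), and the leading condition $b_1=\True$ is restored exactly by the bookkeeping that prepends $([0],\True)$ when the original set began consistently at time $0$ (or drops it otherwise). The nonnegativity condition is inherited. Next, $\JoinOp$, which realizes $\cup$ in \eqref{e:op:lor}, requires the main work: I would argue that merging two alternating sequences and coalescing overlapping consistent regions yields a sorted alternating sequence, invoking the hypothesis that the merged sequence is already sorted (this is exactly what fails when an $\mathbf{error}$ is thrown). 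The condition $b_1=\True$ follows because a union is consistent from the earliest lower bound among the inputs. Since $\Intersect$ can be obtained from $\Invert$ and $\JoinOp$ via De Morgan duality, or treated directly by the analogous merge argument, its canonicity reduces to the two cases already settled.

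The remaining and most delicate case is $\ShiftAll_\ttt$, which implements the until-propagation \eqref{e:op:until} together with the $\AlgName{Shift}_\ttt$ map $\s\mapsto[\LB{s}-\UB{t},\UB{s}-\LB{t})\cap\PosRealSet$. Here the bound enclosures are translated by differing amounts ($-\UB{t}$ for lower bounds, $-\LB{t}$ for upper bounds, with $\LB{t}\le\UB{t}$), intersected with $\phi_1$'s consistent regions, and clipped at $0$, so sortedness and alternation are no longer automatic. I would show that the $\cap\PosRealSet$ clipping absorbs any bounds pushed below zero (restoring nonnegativity and the $b_1=\True$ leading condition), and that the shift-then-intersect combination, under the no-error assumption, preserves the alternating lower/upper structure.

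The hard part will be $\ShiftAll_\ttt$: because the two endpoints of each interval move by different offsets, the shifted intervals can in principle overlap, reorder, or collapse, and the example with overlapping bounds (Figure~\ref{f:overlap}) shows precisely the ambiguity that must be excluded. The crux is therefore to make rigorous that whenever the procedure does \emph{not} raise an error --- i.e. whenever the resulting sequence of bound enclosures is genuinely sortable with strict separation $\UB{s}_i<\LB{s}_{i+1}$ --- the polarities must still alternate and begin with $\True$. I expect this to follow from the invariant that the error checks in the procedures are triggered on exactly the configurations (overlapping or misordered bounds) that would violate sortedness or alternation, so that surviving the error checks is equivalent to satisfying the canonicity conditions; formalizing this equivalence for the differential shift is the main obstacle.
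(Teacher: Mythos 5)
Your overall plan (per-operation case analysis, with $\Intersect$ reduced to $\Invert$ and $\JoinOp$ by De Morgan duality) is sensible, but the pivot of your argument contains a genuine gap: you claim that the error checks fire on exactly the configurations that would violate sortedness or alternation, ``so that surviving the error checks is equivalent to satisfying the canonicity conditions.'' That equivalence is false, and it is not how the paper argues. The raw outputs of $\JoinOp$ and, above all, of $\AlgName{ShiftPairs}_\ttt$ are routinely non-canonical in perfectly normal, error-free runs: distinct pairs of bound enclosures from $\T_1$ and $\T_2$ produce shifted intervals that overlap with the \emph{same} polarity, bounds that land strictly inside other consistent regions (embedded bounds), bounds pushed below zero by the offsets $-\UB{t}$ and $-\LB{t}$, and sequences whose first element is an upper bound. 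None of these raise an error. The reason the lemma nevertheless holds is that every one of the four procedures funnels its raw result through $\Normalize$, and the paper's proof is organized around exactly that mechanism: each canonicity condition of Definition~\ref{d:approxs} is enforced by a dedicated sub-step --- $\NOp_1$ removes embedded bounds by counting the excess of lower-bound over upper-bound enclosures to their left (this is where canonicity of the inputs, namely that their smallest elements are lower bounds, is actually used), $\NOp_2$ discards elements with non-positive upper bounds, $\NOp_3$ merges overlapping bounds of the same polarity, and $\NOp_4$ prepends $([0],\True)$ when the first surviving element is an upper bound. Errors occur only in the two genuinely ambiguous configurations (an overlap of opposite-polarity bounds, or an upper-bound enclosure containing $0$), not whenever the intermediate set fails to be canonical.

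Consequently, the ``main obstacle'' you anticipate --- formalizing no-error $\Leftrightarrow$ canonical for the differential shift --- cannot be overcome, because the forward direction is simply untrue; what must be proved instead is that $\Normalize$ \emph{repairs} every non-canonical but unambiguous configuration. Once that is established (one short argument per condition, as the paper does), all four operations are handled uniformly, and your delicate analysis of $\ShiftAll$ largely evaporates: the differing offsets applied to lower and upper bounds may reorder and overlap enclosures at will, and the only thing that matters is what $\Normalize$ does with the resulting set.
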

\begin{proof}
	See Appendix~\ref{s:canonical:proof}.
\end{proof}

\begin{theorem}[Soundness] \label{th:soundness}
	Consider an STL formula $\phi$ and a set $\mathcal{T} = \{\T_p\}_{p \in \APSet_\phi}$ of approximated sets of time intervals that are consistent with atomic propositions.
	If $\T_\phi = \Propagate(\phi, \mathcal{T})$,
	then $\T_\phi$ is an approximation of $T_\phi$.
\end{theorem}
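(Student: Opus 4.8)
The plan is to prove the soundness of $\Propagate$ by structural induction on the STL formula $\phi$, following the recursive structure of Algorithm~\ref{a:propagate}. The statement to establish is that if $\mathcal{T}$ consists of approximations of the consistent-time-interval sets $T_p$ for each atomic proposition $p \in \APSet_\phi$, then the returned $\T_\phi$ is an approximation of $T_\phi$ in the sense of Definition~\ref{d:approxs}. Being an approximation has two ingredients: canonicity, and the five membership/uniqueness conditions relating the enclosures in $\T_\phi$ to the true bounds of the intervals in $T_\phi$. Lemma~\ref{th:canonical} already discharges canonicity for each individual operation (assuming no $\mathbf{error}$), so the induction can focus on the bound-enclosure conditions while invoking Lemma~\ref{th:canonical} to carry canonicity through each step.

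First I would set up the induction with the base case $\phi = p$. Here $\Propagate$ returns $\T_p$, which is an approximation of $T_p$ by the hypothesis on $\mathcal{T}$; nothing is to prove. For the inductive step I would treat the three constructs in turn. The key observation is that the set-level identities \eqref{e:op:neg}--\eqref{e:op:until} define $T_{\neg\phi'}$, $T_{\phi_1\lor\phi_2}$, and $T_{\phi_1\Until_\ttt\phi_2}$ from the $T$-sets of the subformulae, and the procedures $\Invert$, $\JoinOp$, $\ShiftAll$ in Figure~\ref{f:operations} are designed to be the interval-level counterparts. So for each construct I would argue that each true bound of an interval in the composed $T$-set is enclosed by a correspondingly-tagged interval in the procedure's output, and conversely each output interval encloses exactly one true bound. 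By the inductive hypothesis, $\T_i$ is an approximation of $T_{\phi_i}$; the task reduces to tracking how a true bound of the composed set arises from bounds of the subformulae and confirming the operation places an enclosing, correctly-polarized interval there. For $\JoinOp$ (union) a boundary of $T_{\phi_1}\cup T_{\phi_2}$ is a boundary of one of the two sets that is not interior to the other, and I would show the merging logic in $\JoinOp$ preserves an enclosure exactly at such points; the inflation/merging of overlapping enclosures must be checked not to drop or duplicate a bound. $\Invert$ (complement relative to $\PosRealSet$) swaps polarities and handles the endpoints $0$ and $t_\Max$, so I would verify the boundary conditions of Definition~\ref{d:approxs}, especially the clause allowing an interval in $T_\phi$ with $\UB{t}\ge\Norm{\phi}$ to contribute only a lower-bound enclosure.

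The $\Until$ case, handled by $\ShiftAll_\ttt$, is where I expect the main obstacle. The set identity \eqref{e:op:until} forms, for each pair $(\t_1,\t_2)\in T_{\phi_1}\times T_{\phi_2}$, the interval $\AlgName{Shift}_\ttt(\t_1\cap\t_2)\cap\t_1$, and $\ShiftAll$ must reproduce this over all pairs at the enclosure level while keeping the result canonical. The difficulty is twofold: the shift operator $\AlgName{Shift}_\ttt(\s)=[\LB{s}-\UB{t},\UB{s}-\LB{t})\cap\PosRealSet$ maps a single bound of $\t_1\cap\t_2$ to a translated bound, so I must track how an enclosure $\s$ of a true bound $\LB{t}$ or $\UB{t}$ transforms into an enclosure of the shifted bound (shifting by the rational interval $\ttt$ widens the enclosure but keeps containment, which I would verify directly from the definition); and the resulting raw collection of intervals is generally not canonical, so the argument must interleave the enclosure-tracking with the re-canonicalization (and the associated $\mathbf{error}$ guard), appealing to Lemma~\ref{th:canonical}. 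Because the outputs of these operations are re-fed as inputs to further operations up the parse tree, I would be careful to state the inductive hypothesis uniformly as ``is an approximation'' so that both canonicity and the bound-enclosure conditions are available at each recursive call. The whole argument is conditioned on no $\mathbf{error}$ arising, which is exactly the hypothesis under which Lemma~\ref{th:canonical} applies, so the two results compose cleanly. I would defer the detailed, operation-by-operation verification to the appendix alongside the proof of Lemma~\ref{th:canonical}, since those arguments share the same case analysis on the interleaving of the sorted enclosure sequences.
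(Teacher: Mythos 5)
Your proposal takes essentially the same approach as the paper's proof: a structural induction on $\phi$ whose base case is discharged by the hypothesis on $\mathcal{T}$, and whose inductive cases argue that $\Invert$, $\JoinOp$, and $\ShiftAll_\ttt$ (followed by $\Normalize$) soundly implement the set-level operations \eqref{e:op:neg}--\eqref{e:op:until} on the enclosed bounds, all conditioned on no $\mathbf{error}$ and with canonicity carried through each step. The paper's appendix proof simply executes the case analyses you defer (the three cases for the smallest element under $\Invert$, the three overlap cases under $\JoinOp$, and the decomposition of $\ShiftAll_\ttt$ into $\Intersect$, $\AlgName{ShiftPairs}_\ttt$, and $\AlgName{ShiftElem}_\ttt$), so your plan is essentially its outline.
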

\begin{proof}
	See Appendix~\ref{s:soundness:proof}.
\end{proof}

Finally, we obtain $\T_\phi$ and conclude that $\phi$ is $\Valid$ if $\s_1$ is the smallest interval in $\T_\phi$ and $\UB{s}_1 \leq 0 < \LB{s}_1'$, $\Unsat$ if $\T_\phi = \emptyset$ or $0 < \LB{s}_1$, or $\Unknown$ if $0 \in [\LB{s}_1,\UB{s}_1)$.
The computation is performed by $\AlgName{ConsistentAtInitTime}$ (see Algorithm~\ref{a:inittime} in the appendix).

\subsection{Computational Cost}
\label{s:complexity}

The time complexity of $\Propagate$ is bounded by the product of the size (i.e., the number of operators) of the considered STL formula $\phi$ and the cost of the procedures $\Invert$, $\JoinOp$, and $\ShiftAll$ in Figure~\ref{f:operations}.
The complexity of the procedures on approximated sets is polynomial in the number of intersections of the signal and the atomic proposition bounds (see Appendix~\ref{s:complexity:op}). The number of iterations in $\MonitorAP$ is bounded by the product of the size of $\APSet_\phi$ and the maximum number of bounds detected for an atomic proposition, i.e., $\Max_{p\in\APSet_\phi}\ \#\T_p$.

The number of bounds detected depends on the oscillations of the ODE solution and the predicate bound. Although it can be very high in theory, it is usually quite small. In the generic case of non tangent intersection between the ODE solution and the predicate bound, the $\SearchZero$ procedure has a quadratic convergence and therefore a very low computational cost. The main computational cost of the method is therefore the validated simulation of the ODE. This cost is difficult to foresee: It highly depends on the ODE and the solver. For example, validated solvers are currently quite inefficient in solving stiff ODEs, and require to iterate many steps leading to a high computational cost. The complexity of $\Propagate$ is bounded by the product of the size (i.e., the number of operators) of the considered STL formula $\phi$ and the cost of the procedures in Figure~\ref{f:operations}.

\begin{table*}[t]
	\centering
	%\vspace{-1em}
	\caption{\label{t:ex:rotation} Experimental results (rotation)} 
	%\small
    \begin{tabular}{l|c|c|c|r|r|r|r} \hline \hline
		$\phi$ & \!\#$\APSet_\phi$\! & $\GBnd$ & $\Wid{\s_1}$ &
		\!\#$\Valid$\! & \!\#$\Unsat$\! & \!\#$\Unknown$\! & time \\
		\hline
		\multirow{5}{*}{
		\hspace{-1.5em}
		\begin{tabular}{l}
		$\Always_{[0,\GBnd]} \Eventually_{[0,6.284]} \neg(x_2-1<0)$
		\end{tabular}}
		&   & \multirow{3}{*}{100} 
			  &  0               &  507 & 493 &   0+0 & 0.51s \\
		&   & &  $2\cdot10^{-6}$ &  483 & 508 &   9+0 & 0.52s \\
		& 1 & &  $2\cdot10^{-3}$ &    0 & 462 & 538+0 & -- \\
		\cline{3-8}
		&   & \multirow{2}{*}{10} 
			  &  0               &  490 & 510 &   0+0 & 0.03s \\
		&   & &  $2\cdot10^{-3}$ &  270 & 470 & 260+0 & 0.02s \\
		\hline
		\multirow{5}{*}{
		\hspace{-1.5em}
		\begin{tabular}{l}
		$\Always_{[0,\GBnd]} \Eventually_{[0,6.284]}$\\
		$(\neg(x_2-1<0) \land \Eventually_{[0,3.142]} \neg(-x_2-1<0))$
		\end{tabular}}
		&   & \multirow{3}{*}{100} 
			  &  0               &  485 & 515 &   0+0 & 1.1s \\
		&   & &  $2\cdot10^{-6}$ &  353 & 505 &  26+116 & 1.03s \\
		& 2 & &  $2\cdot10^{-3}$ &    0 & 463 & 537+0 & -- \\
		\cline{3-8}
		&   & \multirow{2}{*}{10} 
			  &  0               &  514 & 486 &   0+0 & 0.05s \\
		&   & &  $2\cdot10^{-3}$ &   86 & 493 & 421+0 & 0.06s \\
		\hline
		\multirow{5}{*}{
		\hspace{-1.5em}
		\begin{tabular}{l}
		$\Always_{[0,\GBnd]} \Eventually_{[0,6.284]} (\neg(x_2-1<0) \land $\\
		$\Eventually_{[0,1.571]} (\neg(-x_2<0) \land$\\
		$\Eventually_{[0,1.571]} (\neg(-x_2-1<0) \land$\\
		$\Eventually_{[0,1.571]} (-x_2<0)\ )))$
		\end{tabular}}
		&   & \multirow{3}{*}{100} 
			  & 0               &  482 & 518 &   0+0 & 1.7s \\
		&   & & $2\cdot10^{-6}$ &  346 & 498 & 18+138 & 1.6s \\
		& 3 & & $2\cdot10^{-3}$ &    0 & 0   & 1000+0 & -- \\
		\cline{3-8}
		&   & \multirow{2}{*}{10} 
			  & 0               &  516 & 484 &   0+0 & 0.08s \\
		&   & & $2\cdot10^{-3}$ &   84 &   0 & 916+0 & 0.09s \\
		\hline
		\multirow{5}{*}{
		\hspace{-1.5em}
		\begin{tabular}{l}
		$\Always_{[0,\GBnd]} \Eventually_{[0,6.284]} (\neg(x_2-1<0) \land$\\
		$\Eventually_{[0,0.786]} ((x_2\!-\!0.707\!<\!0) \land
		\Eventually_{[0,0.786]} (\neg(-x_2\!<\!0) \land$\\
		$\Eventually_{[0,0.786]} (\neg(-x_2\!-\!0.707\!<\!0) \land
		\Eventually_{[0,0.786]} (\neg(-x_2\!-\!1\!<\!0) \land$\\
		$\Eventually_{[0,0.786]} ((-x_2\!-\!0.707\!<\!0) \land
		\Eventually_{[0,0.786]} ((-x_2\!<\!0) \land$\\
		$\Eventually_{[0,0.786]} \neg(x_2-0.707<0)\ )))))))$
		\end{tabular}}
		&   & \multirow{3}{*}{100} 
			  & 0               &  490 & 510 &   0+0 & 2.7s \\
		&   & & $2\cdot10^{-6}$ &  352 & 477 &  74+97 & 2.7s \\
		& 5 & & $2\cdot10^{-3}$ &    0 &   0 & 1000+0 & -- \\[.1em]
		\cline{3-8}
		&   & \multirow{2}{*}{10} 
			  & 0               &  499 & 501 &   0+0 & 0.14\\
		&   & & $2\cdot10^{-3}$ &    0 &   0 & 1000+0 & -- \\[.1em]
		\hline
	\end{tabular}
\end{table*}

\section{Experiments}
\label{s:ex}

We have implemented the proposed method and experimented on two examples to confirm the effectiveness of the method.
Experiments were run on a 3.4GHz Intel Xeon processor with 16GB of RAM.

\subsection{Implementation}
\label{s:impl}

Algorithms~\ref{a:main}--\ref{a:inittime} were implemented in OCaml and C/C++.
ODEs were solved by procedures in the CAPD library.
%The implementation randomly selects a point value in $\Init$ and computes based on the proposed algorithms.
%
The configurable parameters $t_\Min$, $\epsilon$, and $\theta$ correspond to the smallest integration step size that CAPD can take, the threshold used in Figure~\ref{a:searchzero}, and the threshold used in $\AlgName{Inflate}$, respectively.
In the experiments, these parameters were set as $t_\Min := 10^{-14}$, $\epsilon := 10^{-14}$, and $\theta := 0.01$.

\subsection{Verification of the Rotation System}

% Explain the intuitive meaning of the properties

We verified the system in Example~\ref{ex:rotation} on four STL formulae.
The specifications and results of this experiment are summarized in  Table~\ref{t:ex:rotation}.
The first column lists the STL formulae in which the bound $\GBnd$ of each $\Always$ operator is parameterized and set to either $\GBnd:=100$ or $\GBnd:=10$.
The column ``\#$\APSet_\phi$'' represents the number of atomic propositions in each $\phi$.
In each verification, the parameter value $u_1$ was first randomly selected from $[-0.1,0.1]$ and then modified to $u_1 := u_1 + \u_1$, where $\u_1$ was any of $[0]$, $[-10^{-6},10^{-6}]$, or $[-10^{-3},10^{-3}]$.
The column ``$\Wid{\u_1}$'' indicates the interval used in each verification.

The considered STL properties are assumed to hold if $u_1 > 0$ and not to hold if $u_1 < 0$.
Each STL property was verified for 1000 times.
The columns ``\#$\Valid$'', ``\#$\Unsat$'', and ``\#$\Unknown$'' list the numbers of runs resulting in each output; the ``\#$\Unknown$'' outputs are separated with `$+$' according to whether it was caused by an error in the $\SearchZero$ algorithm or an error in the $\Propagate$ and $\AlgName{ConsistentAtInitTime}$ algorithms.
The column ``time'' lists the average CPU time taken for a $\Valid$ verification.

From the results, we can observe that the rates of inconclusive runs were related to the simulation lengths, the uncertainties in the parameter values, and the size of the formula $\phi$.
$\Unknown$ results were generated by the interval Newton process in $\SearchZero$ and the undecidable situations in $\Propagate$ and $\AlgName{ConsistentAtInitTime}$. In this experiment, verification failures increased as the value of $u_1$ approached 0 and the signal and boundary condition became close to tangent.
When the parameter values were exact and $\Wid{\u_1} = 0$, all the verifications succeeded even under near-singular conditions because the considered signals were always enclosed with tight intervals.
As coarser intervals were appended to the parameter values and the simulation lengths became longer, the number of $\Unknown$ results increased; meanwhile, the number of $\Valid$ results decreased more rapidly than the number of $\Unknown$ results because a $\Valid$ verification required detecting a number of bounds for each atomic proposition. Any detection failure resulted in $\Unknown$.

The bottleneck of the verification process is the $\SearchZero$ algorithm that integrates ODEs and searches for boundary intervals. The number of calls to $\SearchZero$ depends on the size of $\APSet_\phi$ and the number of bounds as described in Section~\ref{s:complexity}.
Therefore, the runtime increased linearly in either the number of atomic propositions or the simulation length that should be proportional to the number of bounds.
The cost of evaluation of the STL formulae seemed relatively small and not affecting the overall timings.

\subsection{Verification of the Lorenz System}

We verified the system in Example~\ref{ex:lorenz} on the following STL formula:
\begin{multline} \label{e:lorenz}
	\Always_{[0,15]} (\neg(-x_1-15<0) \Rightarrow \\
	\Eventually_{[0.5,5]} \Always_{[0,1]} ((x_1\!-\!10)^2 \!+\! (x_2\!-\!10)^2 \!-\! 150<0))
\end{multline}
In each verification, the parameters were set to exact values randomly selected from the domain.
The signal $(x_1, x_2)$ oscillates on either the positive or the negative side. According to the formula, when $x_1$ descends below $-15$, $(x_1,x_2)$ moves into the disk $(x_1-10)^2+(x_2-10)^2 <150$ after some duration in the interval $[0.5,5]$ and remains in the disk for at least 1 time unit.

The experimental results are summarized in Table~\ref{t:ex:lorenz}.
As in Table~\ref{t:ex:rotation}, the columns (from left to right) represent the number of atomic propositions, the numbers of $\Valid$, $\Unsat$, and $\Unknown$ verification results in 1000 runs, and the average CPU time for a $\Valid$ verification.

\begin{table}[ht]
\begin{center}
	\caption{\label{t:ex:lorenz} Experimental results (Lorenz)} 
    \begin{tabular}{c|r|r|r|r} \hline\hline
		\#$\APSet_\phi$ & \#$\Valid$ & \#$\Unsat$ & \#$\Unknown$ & time \\
		\hline
		2 & 566 & 413 & 21 & 9.2s \\
		\hline
	\end{tabular}
\end{center}
\end{table}

This experiment demonstrated that the proposed method can handle a chaotic system with a nonlinear atomic proposition. In such systems, non-validated numerical methods frequently output wrong results because of rounding errors, as shown in the next section.
As explained in Example~\ref{ex:lorenz}, CAPD integration generated a coarse enclosure of the signal (around $25.8$ time units), which introduced errors in the integration process $\XC$ or the interval Newton process.
These errors would account for the 21 $\Unknown$ results in Table~\ref{t:ex:lorenz}.

\subsection{Comparison with Breach Toolbox}

For comparative purposes, we ran the above problems on the Breach Toolbox~\cite{Donze2010a} (built from commit ed1178c in the Mercurial repository), a tool for STL verification based on numerical computation with rounding errors.
Breach can check the satisfiability and the \emph{robustness}, which is quantified by a positive or negative real value based on the distance between a considered signal and the bound in the state space where the satisfaction of the STL property switches.

For the rotation system, when the parameter value $u_1$ approached 0 (specifically, at $u_1 := 0.001$), Breach returned $\Unsat$, whereas our implementation returned $\Valid$.
This incorrect verification was implied by the low robustness value.
In this example, the robustness was low for all parameter values because the initial part of the signal was close to the bounds of the atomic propositions.

For the Lorenz system, the numerical integration process of Breach yielded incorrect signals, as explained in Example~\ref{ex:lorenz:sig}; therefore, the verification results were unreliable. 
For example, when $u = (10,28,2.5)$, Breach reported an $\Unsat$ verification of property~\eqref{e:lorenz}, whereas our method returned certified $\Valid$.

Breach ran more quickly than our implementation: it required less than 0.01s for both problems.

\section{Conclusions}

We have presented a sound STL validation method for checking that all initialized signals satisfy the properties of a system.
%As shown in Section~\ref{s:delta}, a simple method based on a $\delta$-complete reachability analysis does not produce practical results.
%If a model is described by an equation, a $\True$ witness is rarely obtained, and a $\DTrue$ witness might be wrong.
The proposed method detects a witness signal and verifies its unique existence using an interval-based ODE integration and an interval Newton method.
The experimental results demonstrate the potential for the method as a practical tool.
%Our implementation can be optimized in various ways, e.g., by sharing the states between several searches by $\SearchZero$.

In future work, we will improve our method and implementation to handle hybrid systems and large and uncertain initial values.
Examples in a realistic setting should be demonstrated with the implementation.

\section*{Acknowledgments}

This work was partially funded by JSPS (KAKENHI 25880008 and 15K15968).

\bibliographystyle{ieicetr}
\bibliography{mitl}
%\begin{thebibliography}{99}
%\bibitem{texbook}%
%D.E. Knuth, The \TeX{}book, Addison-Wesley, 1989.
%\end{thebibliography}

\appendix*
\section{Omitted Procedures and Proofs}

Procedures of the operations on approximated sets are specified in Figure~\ref{f:operations}.
$\Invert$, $\JoinOp$, $\Intersect$, and $\ShiftAll$ implement the operations in Step~2 of Section~\ref{s:stl:monitoring} as procedures that modify the set of boundary intervals.
$\ShiftAll$ consists of sub-procedures $\AlgName{ShiftPairs}$ and $\AlgName{ShiftElem}$; $\AlgName{ShiftPairs}$ computes the intersections and back-shifting pairwise ($\AlgName{Pairs}(\T)$ enumerates approximations of time intervals in $\T$); $\AlgName{ShiftElem}$ applies the back-shifting.
The results of the procedures may become non-canonical, so $\Normalize$ is applied at last to make them canonical.

%We assume that, when a non-canonical approximation $\T_\bot$ is input, each operation results in an error.

$\AlgName{ConsistentAtInitTime}$ is implemented in Algorithm~\ref{a:inittime}.
An input $\T_\phi$ is either $\Universe$, $\emptyset$, or an approximated set;
in the last case, the algorithm picks an earliest approximation with $\AlgName{GetFirstElem}$, and checks whether it contains 0 or not.

\begin{figure*}[t]
\begin{align*}
	\Invert(\T) &:= 
	\begin{cases}
		\makebox[20em][l]{$\emptyset$} & \text{if $\T = \Universe$} \\
		\Universe & \text{if $\T = \emptyset$ ~~} \\
		\Normalize(\ \{(\s,\neg b) ~|~ (\s,b)\in\T \}\ )
		& \text{otherwise~}
	\end{cases} \\
	\JoinOp(\T_1,\T_2) &:= 
	\begin{cases}
		%\omit\rlap{$\Universe$ \qquad if $\T_{1} = \T_0 \LOr \T_{2} = \Universe$} \\[.1em]
		\makebox[20em][l]{$\Universe$} & \text{if $\T_{1} = \Universe \LOr \T_{2} = \Universe$} \\
		\Normalize(\ \T_{1} \cup \T_{2}\ ) & \text{otherwise}
	\end{cases} \\
	\Intersect(\T_1,\T_2) &:=~
	\Invert(\ \JoinOp(\Invert(\T_1), \Invert(\T_2))\ ) \\
	\ShiftAll(\T_1,\T_2) &:=
	\begin{cases}
		\makebox[20em][l]{$\emptyset$} & \text{if $\T_1 = \emptyset \LOr \T_2 = \emptyset$} \\
		\Normalize(\ \AlgName{ShiftPairs}_{\ttt}(\mathbm{T}_1,\mathbm{T}_2)\ )
		& \text{otherwise}
	\end{cases} \\
	\AlgName{ShiftPairs}_{\ttt}(\mathbm{T}_1,\mathbm{T}_2) &:=
	\begin{cases}
		\makebox[21em][l]{$
		\bigl\{ \AlgName{ShiftElem}_\ttt(\P_2) 
		$}
		~~|~~ 
		%\exists \P_2 = \{(\s_2,\True),(\s'_2,\False)\} \subseteq \T_2 \bigr\} \\
		\P_2 \in \AlgName{Pairs}(\T_2) \bigr\}
		& \text{if $\T_1 = \Universe$} \\
		\makebox[21em][l]{$
		\bigl\{ \AlgName{Intersect}(\ \AlgName{ShiftElem}_\ttt(\P_1),\ \P_1\ ) 
		$}
		~~|~~ \P_1 \in \AlgName{Pairs}(\T_1) \bigr\}
		& \text{if $\T_2 = \Universe$} \\
		\makebox[21em][l]{$
		\bigl\{ \AlgName{Intersect}(\ \AlgName{ShiftElem}_\ttt(\AlgName{Intersect}(\P_1,\P_2)),\ \P_1\ ) 
		$}
		~~|~~ \P_1 \in \AlgName{Pairs}(\T_1), \P_2 \in \AlgName{Pairs}(\T_2) \bigr\} 
		%\makebox[21em][l]{} 
		& \text{otherwise}
	\end{cases} \\
	\AlgName{ShiftElem}_\ttt(\T) &:=~
		\Normalize(\ \bigl\{ (\s-\LB{t},\True) ~|~ \Exists{(\s,\True)}{\T} \bigr\} \cup 
		\bigl\{(\s-\UB{t},\False) ~|~ {(\s,\False)}\in{\T} \bigr\}\ ) \\
	\Normalize(\T) &:=
	\begin{cases}
		\textbf{error} & \text{if $\Exists{(\s,\False)}{\T} ~ \s \neq [0] \LAnd \s \ni 0$} \\
		\makebox[20em][l]{$\textbf{error}$} & \text{if $\Exists{(\s,b),(\s',\neg b)}{\T} ~ \s \cap \s' \neq \emptyset$} \\
		%\AlgName{CloseLB}(\AlgName{MergeElem}_{\mathbm{T}}^*(\T)) & \text{otherwise}
		\AlgName{N}_4(\AlgName{N}_3(\AlgName{N}_2(\AlgName{N}_1(\T)))) & \text{otherwise}
	\end{cases} \\
	\AlgName{N}_1(\T) &:=~ \bigl\{ (\s,\True)\in\T ~|~ 
		\# \{(\s',\True)\in\T ~|~ \UB{s}' < \LB{s}\} - 
		\# \{(\s'',\False)\in\T ~|~ \UB{s}'' < \LB{s}\} < 1 \bigr\}\ \cup \\
	& \qquad \bigl\{ (\s,\False)\in\T ~|~ 
		\# \{(\s',\True)\in\T ~|~ \UB{s}' < \LB{s}\} - 
		\# \{(\s'',\False)\in\T ~|~ \UB{s}'' < \LB{s}\} < 2 \bigr\} \\
	\AlgName{N}_2(\T) &:=
	\begin{cases}
		\makebox[20em][l]{$\Universe$} & \text{if $\Max\ \T = (\s,\True)$ such that $\UB{s} \leq 0$}\\
		\{ (\s,b)\in\T ~|~ \UB{s} > 0 \} & \text{otherwise}
	\end{cases} \\
	\AlgName{N}_3(\T) &:=~ \bigl\{ (\s,b)\in\T ~|~ \ForAll{(\s',b)}{\T} ~ \s\cap\s' = \emptyset \bigr\} ~\cup~ \bigl\{ (\s\cup\s',b) ~|~ \Exists{(\s,b),(\s',b)}{\T}~ \s\cap\s' \neq \emptyset \bigr\} \\
	\AlgName{N}_4(\T) &:=
	\begin{cases}
		\makebox[20em][l]{$\T \cup \{([0],\True)\}$} & \text{if $(\t,\False) = \Min\ \T$} \\
		\T & \text{otherwise}
	\end{cases} \\
	%
	%\AlgName{MergeElem}_{\mathbm{T}_0}(\T) &:=
	%\begin{cases}
	%	\makebox[10em][l]{$\T \setminus \{(\s,\True)\}$} 
	%	\text{if $\Exists{(\s,\True)}{\T_0}$} ~
	%	\# \{(\s',\True)\in\T_0 ~|~ \UB{s}' < \LB{s}\} - 
	%	\# \{(\s'',\False)\in\T_0 ~|~ \UB{s}'' < \LB{s}\} \geq 1 \\
	%	\makebox[10em][l]{$\T \setminus \{(\s,\False)\}$}
	%	\text{if $\Exists{(\s,\False)}{\T_0}$} ~
	%	\# \{(\s',\True)\in\T_0 ~|~ \UB{s}' < \LB{s}\} - 
	%	\# \{(\s'',\False)\in\T_0 ~|~ \UB{s}'' < \LB{s}\} \geq 2 \\
	%	\makebox[21em][l]{$\T \setminus \{(\s,b)\}$} \text{if $\Exists{(\s,b)}{\T}~ \UB{s} \leq 0$} \\
	%	\makebox[21em][l]{$\T \setminus \{(\s,b),(\s',b)\} \cup \{(\s\cup\s',b)\}$} 
	%	\text{if $\Exists{(\s,b),(\s',b)}{\T}~ \s\cap\s' \neq \emptyset$} \\
	%	\makebox[21em][l]{$\T$} \text{otherwise}
	%\end{cases} \\
\end{align*}
\caption{\label{f:operations} Procedures for approximated sets of consistent time intervals}
\end{figure*}

\begin{algorithm}[thb]
\caption{\label{a:inittime} $\AlgName{ConsistentAtInitTime}$ algorithm}

\begin{algorithmic}[1]
  \REQUIRE $\T_\phi$
  \ENSURE $\Valid$, $\Unsat$, or $\Unknown$

  \IF{$\T_\phi = \Universe$}
	\RETURN{$\Valid$}
  \ELSIF{$\T_\phi = \emptyset$}
	\RETURN{$\Unsat$}
  %\ELSIF{$\T_\phi = \NaN$}
  %	\RETURN{$\Unknown$}
  \ELSE
    \STATE $(\s,\True) := \AlgName{GetFirstElem}(\T_\phi)$
    \IF{$\UB{s} \leq 0$}
	  \RETURN{$\Valid$}
    \ELSIF{$\LB{s} > 0$}
	  \RETURN{$\Unsat$}
    \ELSEC{$0 \in \s$}
	  \RETURN{$\Unknown$}
    \ENDIF
  \ENDIF
\end{algorithmic}
\end{algorithm}

\subsection{Proof of Lemma~\ref{th:consistent}}
\label{s:canonical:proof}

We check that each condition of a canonical approximation (Definition~\ref{d:approxs}) is assured by $\Normalize$, the last sub-process in each procedure:
\begin{itemize}
\item During the propagation process, polarity alternation might be inhibited by $\JoinOp$, $\Intersect$, or $\ShiftAll$, which locates a boundary interval inside another consistent time interval.
These \emph{embedded} bounds are removed by $\AlgName{N}_1$.
An embodiment can be determined by checking the difference between the numbers of lower and upper bounds in the past since the smallest elements in $\T_1$ and $\T_2$ are always the lower-bound enclosures.
\item The upper bound of each time interval $\s$ in $\T$ becomes non-negative because elements with non-positive upper bounds are filtered out by $\AlgName{N}_2$.
\item No two elements of $\T$ overlap because an overlapping pair with opposite polarity results in an error (the second branch of $\Normalize$) and an overlap with the same polarity is joined ($\AlgName{N}_3$);
thus, the elements in $\T$ can be sorted.
\item $\AlgName{N}_4$ assures that the polarity value of the smallest element is $\True$.
\Qed
\end{itemize}

\subsection{Proof of Theorem~\ref{th:soundness}}
\label{s:soundness:proof}

We perform a structural induction based on the STL formulae.

For the base case $\phi = p \in \APSet_{\phi}$, $\T_p$ exists in $\mathcal{T}$.

For the inductive step, consider STL formulae $\phi_1$ and $\phi_2$, and assume as the inductive hypothesis that we have canonical approximated sets $\T_{\phi_1}$ and $\T_{\phi_2}$ of $T_{\phi_1}$ and $T_{\phi_2}$, respectively.
We show that $\Propagate$ computes the approximated set properly for each formula constructed from $\phi_1$ and $\phi_2$.

When $\phi = \neg \phi_1$, the polarity of each bound of $\T_{\phi_1}$ is switched by $\Invert$ to obtain an approximated set for the complementary time intervals, which is sound regarding the operation~\eqref{e:op:neg} in Step~2 of Section~\ref{s:stl:monitoring}.
Then, $\Normalize$ is applied to canonicalize the result;
it will append or remove the smallest bound.
Let $(\s,\False)$ be the smallest element in a result of polarity inversion.
We confirm that $\Normalize$ is sound in a case analysis:
\begin{itemize}
%\item if $b = \True$ and either $\LB{s} > 0$ or $\s$ is non-empty and containing 0, the element remains the smallest element;
%\item if $b = \True$ and $\s = [0]$, the element will be once removed by $\NOp_2$ and then reverted by $\NOp_4$;
\item if $\s$ is non-empty and $\s \ni 0$, the computation results in an error (the first branch of $\Normalize$);
\item if $\LB{s} > 0$, the element remains and the element $([0],\True)$ is appended by $\NOp_3$;
\item if $\s = [0]$, the element is removed by $\NOp_2$.
\end{itemize}
%When all the elements in $\T$ are non-positive, the result should be $\emptyset$ or $\Universe$; the first branch of $\NOp_2$ handles the latter case.

When $\phi = \phi_1 \lor \phi_2$, $\T_{\phi_1}$ and $\T_{\phi_2}$ are modified by $\AlgName{Join}$, which joins the elements of both approximated sets; a result might be a non-canonical set when two approximated time intervals from $\T_{\phi_1}$ and $\T_{\phi_2}$ overlap.
Then, $\Normalize$ is applied to unify two overlapping approximations so that the result becomes a sound approximated set with respect to the operation~\eqref{e:op:lor}.
When two approximated time intervals $((\s_1,\True),(\s_1',\False))$ and $((\s_2,\True),(\s_2',\False))$ overlap, the boundary interval (e.g., $\s_1$) either (i) overlaps with another boundary interval, (ii) is included in the inner approximation $(\UB{s}_2,\LB{s}_2')$, or (iii) is excluded from the outer approximation $[\LB{s}_2,\UB{s}_2']$. \Todo{Need a figure?}
We confirm the soundness of $\Normalize$ in another case analysis:
\begin{itemize}
\item in case (i), the bound is removed by the second branch of $\Normalize$ and by $\NOp_3$;
\item in case (ii), the bound is removed by $\NOp_1$; 
\item in case (iii), the bound remains since it should be the bound of the joined time interval.
\end{itemize}

When $\phi = \phi_1 \Until_\ttt \phi_2$, $\T_{\phi_1}$ and $\T_{\phi_2}$ are modified by $\AlgName{ShiftAll}_\ttt$, which applies $\Intersect$, $\AlgName{ShiftPairs}_\ttt$ and $\AlgName{ShiftElem}_\ttt$, those implement the operation~\eqref{e:op:until}.
The soundness of $\AlgName{Intersect}$ with respect to the set intersection is evident because this procedure simply implements the set operation $(\T_1\setminus\PosRealSet \cup \T_2\setminus\PosRealSet)\setminus\PosRealSet$.
$\AlgName{ShiftPairs}_\ttt$ exhaustively applies $\AlgName{ShiftElem}_\ttt$ to each pair of boundary enclosures in $\T_1$ and $\T_2$.
$\AlgName{ShiftElem}_\ttt$ translates the lower and upper bounds, according to the operation~\eqref{e:op:until}; this procedure is sound because an interval enclosure is assumed for each bound of the consistent time intervals.
$\Normalize$, then, resolves the overlaps and closes the lowest bound as in the case of $\phi_1 \lor \phi_2$.
\Qed

\subsection{Computational Complexity of the Operations on Approximated Sets}
\label{s:complexity:op}

Let $\#\T$ be the number of elements in $\T$;
if the bounds appear uniformly in a simulation, $\#\T$ is proportional to $\Norm{\phi}$;
in other words, $\#\T$ is bounded by $\Norm{\phi}/\epsilon^*$ where $\epsilon^*$ is the precision of the floating-point numbers.
The complexity of $\Normalize$ is bounded by $O(\#\T^2)$ since the complexities of $\NOp_1$, $\NOp_2$, $\NOp_3$, and $\NOp_4$ are $O(\#\T^2)$, $O(\#\T)$, $O(\#\T)$, and $O(1)$, respectively.
Without the $\Normalize$ process, the complexities of $\Invert$ and $\JoinOp$ are $O(\#\T)$ and $O(1)$, respectively; together with $\Normalize$, their complexities are $O(\#\T^2)$.
The complexity of $\ShiftAll$ is $O(\#\T^4)$ (let $\#\T$ be the larger cardinality for $\T_1$ or $\T_2$) since the complexities of $\AlgName{ShiftElem}$ and $\AlgName{ShiftPairs}$ are $O(\#\T^2)$ and $O(\#\T^2 \cdot \#\T^2)$, respectively.

\newpage

\profile{Daisuke Ishii}{%
%was born in 1978. 
recieved B.Eng, M.Eng, and Ph.D. degrees in computer science from Waseda University, Tokyo, Japan in 2001, 2003, and 2010, respectively.
He was a research fellow at INRIA/LINA in France, from 2010 to 2011, and a research fellow of the JSPS at National Institute of Informatics from 2011 to 2013.
He is currently an Assistant Professor at Tokyo Institute of Technology.
His research interests include interval analysis and formal methods for hybrid systems.

Tokyo Institute of Technology, Department of Computer Science,
2-12-1-W8-67, Ookayama, Meguro-ku, Tokyo, 152-8550 Japan.
}
\label{profile}

% FIXME
%\vspace{3em}

\profile{Naoki Yonezaki}{%
Naoki Yonezaki currently is Visiting Professor of Open University of Japan and Emeritus Professor of Tokyo Institute of Technology.
He has been Professor of Tokyo Institute of Technology since 1991. He was also  Professor of Japan Advanced Institute of Science and Technology from1991 till 1995. His research interests include verification of software specification, verification of security and formal approach to system biology.
He is a member of IEICE, IPSJ, JSSST, JSAI, ACM and EATCS. He awarded to a fellowship from JSSST in 2008.

Open University of Japan,
2-11 Wakaba, Mihama-ku, Chiba 261- 8586 Japan.
}

\profile{Alexandre Goldsztejn}{%
After receiving his Ph.D. from the University of Nice in 2005, he has spent one year as a postdoctoral fellow in USA, in the University of Central Arkansas and in the University of California Irvine. He was then granted a tenured research associate position at CNRS, where he continued his researches on numerical constraint programming, with emphasis on quantified constraints and positive dimensional manifolds, global optimization and dynamical systems.

IRCCyN -- Ecole Centrale de Nantes,
1, rue de la No\"{e}, BP 92101, 44321 Nantes Cedex~3, France.
}

\end{document}